\DeclareMathOperator*{\argmax}{arg\,\text{max}}
\DeclareMathOperator*{\argmin}{arg\,min}
\tikzstyle{information} = [very thick, -{Stealth[length=2mm, width=2mm]}, green]
\tikzstyle{monetary} = [very thick,-{Stealth[length=2mm, width=2mm]}, blue]
\tikzstyle{action} = [very thick,-{Stealth[length=2mm, width=2mm]}, red]
\newcommand{\BlackBox}{\setlength{\fboxsep}{0pt}\fbox{\rule{0pt}{1.5ex}\rule{1.5ex}{0pt}}}  
\newcommand{\Ind}[1]{\mathbf{1}_{\{#1\}}}
\theoremstyle{plain}
\newtheorem{theorem}{Theorem}[section]
\newtheorem{proposition}[theorem]{Proposition}
\newtheorem{lemma}[theorem]{Lemma}
\newtheorem{corollary}[theorem]{Corollary}
\theoremstyle{definition}
\newtheorem{definition}[theorem]{Definition}
\newtheorem{remark}[theorem]{Remark}
\icmltitlerunning{Selling Information in Games with Externalities}
\begin{document}

\twocolumn[
\icmltitle{Selling Information in Games with Externalities}



\icmlsetsymbol{equal}{*}

\begin{icmlauthorlist}
\icmlauthor{Thomas Falconer}{dtu}
\icmlauthor{Anubhav Ratha}{vestas}
\icmlauthor{Jalal Kazempour}{dtu}
\icmlauthor{Pierre Pinson}{imp,dtu}
\icmlauthor{Maryam Kamgarpour}{epfl}
\end{icmlauthorlist}

\icmlaffiliation{dtu}{DTU, DK}
\icmlaffiliation{vestas}{Vestas, DK}
\icmlaffiliation{imp}{Imperial College London, UK}
\icmlaffiliation{epfl}{EPFL, CH}

\icmlcorrespondingauthor{Thomas Falconer}{falco@dtu.dk}

\icmlkeywords{}

\vskip 0.3in
]



\printAffiliationsAndNotice{}

\begin{abstract}
\textit{A competitive market is modeled as a game of incomplete information.
One player observes some payoff-relevant state and can sell (possibly noisy) messages thereof to the other, whose willingness to pay is contingent on their own beliefs. We frame the decision of what information to sell, and at what price, as a product versioning problem. The optimal menu screens buyer types to maximize profit, which is the payment minus the externality induced by selling information to a competitor, that is, the cost of refining a competitor’s beliefs. For a class of games with binary actions and states, we derive the
following insights: (i) payments are necessary to provide incentives for information sharing amongst competing firms; (ii) the optimal menu benefits both the buyer and the seller; (iii) the seller cannot steer the buyer's actions at the expense of social welfare; (iv) as such, as competition grows fiercer it can be optimal to sell no information at all.}
\end{abstract}

\section{Introduction}
Two key trends mark the rise of today's digital economy: (i) the collection of vast amounts of data from our ever-more digitized lives; and (ii) the advancement of computing algorithms, hardware, and platforms that can transform this data into actionable insights.
Many firms are therefore striving to deploy state-of-the-art machine learning models to improve their value propositions. Consequently, the demand for data is growing at an unprecedented rate, to the extent that it has even been dubbed the world's most valuable commodity---the “oil" of the digital age---by \citet{economist2017data}.
This raises an important question: \textit{If data is a commodity, then how should we value it?}

Unlike material commodities, data is an \textit{intangible} asset, the very definition of which is a matter of debate. 
In fact, data is often viewed as an ephemeral entity that can be processed into subjective information, so it's value is not an intrinsic property, but dependent on when, how, and by whom it is actually used \citep{floridi2002philosophy, floridi2009philosophical}. We adopt this view by modeling a \textit{buyer} as a decision-maker under uncertainty that seeks additional information to refine their beliefs about some state of the world.
In turn, a monopolistic \textit{seller} owns data relevant to the buyer's decision, and offers (possibly noisy) messages thereof in exchange for money, for who the value of this information depends on their prior beliefs. 

We explore what information, and at what price, the seller should offer to maximize profit. We assume that both parties compete in a downstream market, so the seller has information relevant to each of their payoffs. This is motivated by growing concerns over the relationship between data and the distribution of market power in many industries, insofar that competing firms face asymmetric access to information.
For example, in June 2021, the EU launched an antitrust investigation into whether Google distorted competition by restricting third party access to user information for advertising, reserving exclusive use for itself \citep{ec_digital_markets_act_2021}. Central to this debate is whether firms should share information to benefit social welfare, or whether they can sell their information for profit.

However, packaging and pricing information to monetize it within competitive environments is not a simple task, as the seller needs to consider not only the buyer's utility, but also any externalities, both positive or negative, they may induce by sharing information with competitor.
We model a competitive market environment as a game of incomplete information, where the players face a common \textit{fundamental uncertainty} associated with a payoff-relevant state of the world. Each player also observes a \textit{private} signal that determines their beliefs about the fundamental uncertainty, meaning they are also subject to \textit{strategic uncertainty} with respect to the signals received by the other players.
Players then simultaneously select an action to maximize their expected payoffs.
We extend this framework by including an interim step where one player (the seller) observes the realized state before actions are chosen.

The buyer's beliefs, and thus their valuation for information, is unknown to the seller, who designs and offers a menu of communication rules, which are likelihood functions which prescribe a distribution over messages the buyer could receive from the seller conditioned on the realized state.
It can therefore be characterized by a degree of obfuscation of the \textit{true} data. This setup is adapted from those seen in information design literature, where a social planner commits to a communication rule to influence the behavior of players in a game \citep{bergemann2019information}. This is equivalent to selecting the Bayes correlated equilibrium that maximizes the planner's objective \citep{bergemann2016bayes}.
In our case, communication rules have an associated price and the problem becomes one of joint information and mechanism design, in which the seller must elicit the buyer's valuation for information.

\subsection{Contributions} 
Before discussing related works, we summarize our contributions and outline of the paper:
\begin{itemize}
    \item In Section~\ref{sec:model}, we formalize a two-player game of incomplete information that we use to model a competitive market. We adopt a simplified setup where both action sets and the state of the world are binary, and consider the case wherein each player has a dominant strategy to match their action with the state. 
    As such, the buyer's private type is one-dimensional and their valuation is both piecewise linear and independent of the seller's action. 
    Since the communication rule purchased influences the buyer's decisions, the seller has (anti-) coordination incentives if their expected payoff (decreases) increases with the probability that the buyer also chooses the correct action.
    Although we focus on competitive environments with anti-coordination incentives, we show that our setup extends to the opposite case which could be viewed as a game with strategic complements instead of strategic substitutes.

    \item In Section~\ref{sec:mechanism-characteristics}, we outline the aspects that constrain the menu offered by the seller. Based on the principle that information is only valuable insofar as it changes the buyer's action, the seller's messages are viewed as action recommendations. We establish in Definitions~\ref{def:truthfulness} and \ref{def:obedience} that a menu is incentive compatible if the buyer is best off truthfully reporting their type \textit{and} obediently following the recommended action. 

    \item In Section~\ref{sec:optimal-mechanism}, we adapt the Myersonian auction format to provide novel insights into selling information to a competitor (see Proposition~\ref{prop:optimal-mechanism-general}). To study the nuances of this product versioning (or second-degree price discrimination) problem, we first consider just two buyer types, ranked according to their valuation of the fully informative communication rule (see Corollary~\ref{corr:optimal-mechanism-binary}). We show that the seller can reduce the information rent of the less informed type by offering partial information depending on the type congruence---whether or not they would take different actions given their prior beliefs---, which is consistent with previous findings \citep{bergemann2018design}. However, unlike this prior work, we show that competition between the buyer and the seller induces an externality in the form of a cost of refining a competitor’s beliefs. This makes the optimal menu a function of both the intensity of competition and the seller's own beliefs.
    For instance, if types are congruent, such that they would both select the same action without additional information, and the seller's own beliefs are strongly in the opposite direction, the externality exceeds the maximum revenue and no information is sold.
    If the seller's beliefs are in the same direction, they may choose to sacrifice some revenue by offering more information to the lower type to reduce the expected externality.
    
    We show for continuous type spaces that the mechanism design problem reduces to maximizing virtual surplus (see Corollary~\ref{corr:optimal-mechanism-continuous}). Our results imply monetary compensation can serve as an effective incentive for information sharing, with the seller able to package and price information for profit, even in competitive environments where it would not otherwise occur organically. The profitability is, however, sensitive to the seller's own beliefs and the intensity of competition, as well the distribution of buyer types, which generally occur with non-zero probability both within and across classes of congruent types. Lastly, as any buyer could ignore the seller's message, the seller's cannot design a menu that maximizes profit at the expense of social welfare. This explains why we observe that as competition grows fiercer it can be optimal to sell no information at all. 
    Intuitively, the seller wants to message the buyer to take the action opposite to their own beliefs with certainty. However, such messages would not satisfy Bayesian consistency, meaning that they are not rational under the prior, and the buyer would not follow the recommended action. 
    This result suggests that fiercely competitive environments, where the externality cost renders information sales unprofitable, may require additional incentives.\footnote{The code used to run all of our simulations is publicly available at the following repository: \url{www.github.com/tdfalc/trading-information}}

    \item Finally, in Section~\ref{sec:conclusions} we gather a set of conclusions and perspectives for future work.
\end{itemize}

\subsection{Related Works}
We now discuss our contributions to the literature on information economics covering three key areas: (i) providing incentives for collaborative analytics; 
(ii) facilitating information sharing in competitive environments; and (iii) selling information to imperfectly informed decision-makers.

\paragraph{Collaborative Analytics.}
For machine learning tasks, raw datasets are typically acquired directly from their owners via bilateral transactions \citep{rasouli2021data}. However, valuing raw data is challenging because it depends on the information the buyer derives, which is difficult to assess before they actually use it. In addition, the value of information is inherently combinatorial, as datasets invariably contain correlated signals, which also makes it challenging to ensure privacy if one's information can be inferred from that of others \citep{acemoglu2022too, fallah2024limits, fallah2024optimal}.
Thus, these seemingly straightforward transactions can quickly become intractable to price.

To address this challenge, recent works propose \textit{analytics markets}: real-time mechanisms that match datasets to machine learning tasks based on predictive value, without transferring raw data.
These markets leverage the privacy benefits of collaborative analytics like federated learning \citep{zhang2021survey}, but add monetary incentives to take part \citep{fallah2024three}.
First proposed by \citet{agarwal2019marketplace}, analytics markets aggregate features from multiple sellers. Buyers submit tasks and bids reflecting their valuation for improved predictions. The platform determines how much information is sold and at what price, allocating market revenue to sellers according to their contributions to accuracy improvements.
Platforms for classification \citep{koutsopoulos2015auctioning} and regression \citep{pinson2022regression} tasks have been proposed, tackling challenges related to strategic behavior \citep{falconer2025towards}, timing of data availability \citep{feng2021uncovering} and financial security of the sellers \citep{falconer2024bayesian}.

These works challenge the notion that raw data has intrinsic value, arguing that its usefulness depends on the information derived. While this view of data valuation is intuitive, it is typically assumed that buyers truthfully report how much they value predictive accuracy, an assumption difficult to sustain in real competitive environments, where sellers may also have incentives to withhold \citep{kakhbod2021selling} or distort \citep{ziv1993information} information to benefit themselves, even at the expense of social welfare.
By treating the platform as a monopolistic seller, clearing these markets can be viewed as a product versioning problem, akin to our setup, where the seller strategically designs offerings to maximize profit. Our work thus offers an initial step toward tackling the broader challenge of pricing information in analytics markets.

\paragraph{Competitive Environments.} 
Early research into incomplete information games found that when multiple equilibria exist without strategic uncertainty, introducing information asymmetry can eliminate this multiplicity. In other words, strategic uncertainty can coordinate players towards a particular equilibrium \citep{hellwig2002public}. This notion underpins \citet{bergemann2016bayes}'s concept of Bayes correlated equilibrium, an extension of \citet{aumann1987correlated}'s complete information correlated equilibrium to Bayesian games, where a mediator privately recommends actions to players based on the realized uncertainties, in a way that players willingly follow the recommendations. Similar insights have shown that players can strategically share information with select others to achieve self-serving outcomes \citep{dahleh2016coordination}.
This connects to the broader literature on information asymmetry and market power, surveyed in detail by \citet{bergemann2019markets}. In seminal works, \citet{admati1986monopolistic, admati1990direct} model a monopolist that sells information to traders, showing that offering noisy signals tailored to each maximizes profits by limiting information leakage through stock prices. \citet{admati1988selling} build on this by allowing the seller to also trade in the market, finding that the value of selling information increases with buyers' risk aversion.

Information sharing in incomplete information games, illustrated in Figure~\ref{fig:sharing-information}, has also been explored in game theory literature; key results are summarized by \citet{raith1996general}. Works such as \citet{novshek1982fulfilled, clarke1983collusion, vives1984duopoly, gal1985information, shapiro1986exchange} model strategic uncertainty in oligopolies, considering the nature of both competition (e.g., strategic complements or substitutes) and fundamental uncertainties (e.g., costs or capacities). A common result is that whilst information sharing can improve production efficiency and social welfare, firms often resist it without compensation, as it may redistribute market power and lower individual profits. In our work, we aim to explore a further dimension: how the intensity of competition, not just its nature, shapes the value of information sharing.

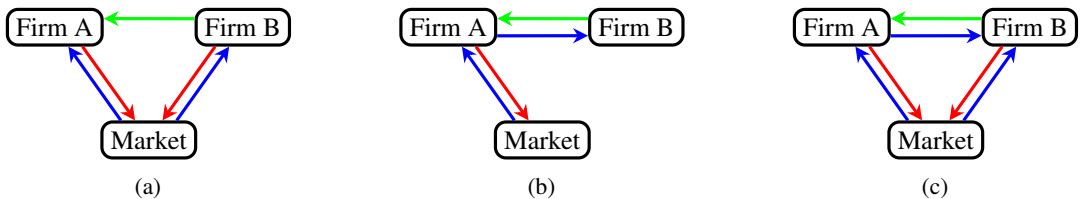
\begin{figure*}[!t]
    \centering
    \begin{subfigure}[]{0.3\textwidth}
        \centering
        \begin{tikzpicture}
            \node[shape=rectangle, very thick, rounded corners, draw, align=center,] (firmA) at (-1.25, 0) {Firm A};
            \node[shape=rectangle, very thick, rounded corners, draw, align=center,] (firmB) at (1.25, 0) {Firm B};
            \node[shape=rectangle, very thick, rounded corners, draw, align=center,] (market) at (0, -1.5) {Market};
            \draw[information] (firmB.170) to (firmA.10) node[black, midway, above] {};
            \draw[action] (firmA.325) to (market.125) node[black, midway, above] {};
            \draw[monetary] (market.145) to (firmA.305) node[black, midway, above] {};
            \draw[action] (firmB.215) to (market.55) node[black, midway, above] {};
            \draw[monetary] (market.35) to (firmB.235) node[black, midway, above] {};
        \end{tikzpicture}
        \caption{}
        \label{fig:sharing-information}
    \end{subfigure}
    \begin{subfigure}[]{0.3\textwidth}
        \centering
        \begin{tikzpicture}
            \node[shape=rectangle, very thick, rounded corners, draw, align=center,] (firmA) at (-1.25, 0) {Firm A};
            \node[shape=rectangle, very thick, rounded corners, draw, align=center,] (firmB) at (1.25, 0) {Firm B};
            \node[shape=rectangle, very thick, rounded corners, draw, align=center,] (market) at (0, -1.5) {Market};
            \draw[information] (firmB.170) to (firmA.10) node[black, midway, above] {};
            \draw[monetary] (firmA.350) to (firmB.190) node[black, midway, above] {};
            \draw[action] (firmA.325) to (market.125) node[black, midway, above] {};
            \draw[monetary] (market.145) to (firmA.305) node[black, midway, above] {};
        \end{tikzpicture}
        \caption{}
        \label{fig:trading-information-third-party}
    \end{subfigure}
    \begin{subfigure}[]{0.3\textwidth}
        \centering
        \begin{tikzpicture}
            \node[shape=rectangle, very thick, rounded corners, draw, align=center,] (firmA) at (-1.25, 0) {Firm A};
            \node[shape=rectangle, very thick, rounded corners, draw, align=center,] (firmB) at (1.25, 0) {Firm B};
            \node[shape=rectangle, very thick, rounded corners, draw, align=center,] (market) at (0, -1.5) {Market};
            \draw[information] (firmB.170) to (firmA.10) node[black, midway, above] {};
            \draw[monetary] (firmA.350) to (firmB.190) node[black, midway, above] {};
            \draw[action] (firmA.325) to (market.125) node[black, midway, above] {};
            \draw[monetary] (market.145) to (firmA.305) node[black, midway, above] {};
            \draw[action] (firmB.215) to (market.55) node[black, midway, above] {};
            \draw[monetary] (market.35) to (firmB.235) node[black, midway, above] {};
        \end{tikzpicture}
        \caption{}
        \label{fig:trading-information}
    \end{subfigure}
    \caption{Existing frameworks for sharing information.
    The blue, red and green arrows indicate monetary transactions, strategic actions, and information flows between the firms and the market, respectively. Whilst in each framework, Firm B shares information with Firm A, the setups differ as follows: in (a) information is shared freely between competitors, in (b) information is purchased yet Firm B is a third party that doesn't compete with Firm A, and in (c) information is purchased and both firms compete in the market.}
  \label{fig:existing-framworks}
\end{figure*}

\paragraph{Selling Data.} 
Our work is most closely related to recent applications of mechanism design to provide incentives for information sharing. To maintain tractability, this often requires abstracting both the competitive environment as a Bayesian game. For example, \citet{babaioff2012optimal} study a setting where buyer utility depends on two uncertain states, with the buyer and seller each privately informed about one. They characterize the optimal mechanism as conditional on the seller’s observed state, allowing the seller to exploit correlations between their private information and the buyer’s type, similar to \citet{cremer1988full}. Giving the seller flexibility to price and allocate information after observing the state parallels classical communication problems such as cheap talk \citep{crawford1982strategic}, verifiable message \citep{milgrom1981good}, and signaling games \citep{spence1978job}. 
In these settings, a mediator, the seller in our case, chooses what to communicate after observing the state. This leads to an intractably large decision space as receivers may discount the message unless their incentives are sufficiently aligned with the mediator’s.

A simpler approach lends itself to the information design literature \citep{bergemann2019information}, where the mediator commits \textit{ex-ante} to a communication rule. This commitment dramatically simplifies the analysis and often yields a unique solution. Although such commitment may be unrealistic in practice, the framework is valuable as a benchmark for understanding the range of achievable outcomes under different information structures. This perspective mirrors that of mechanism design, which also assumes commitment, such as in auctions, and remains useful as a conceptual tool even without a literal mediator, as in \citet{myerson1983efficient}'s analysis of bargaining. If the mediator holds no informational advantage over the players, information design collapses to standard communication games \citep{myerson1986multistage}. With a single receiver, information design reduces to Bayesian persuasion \citep{kamenica2011bayesian}.

Information design literature typically does not study the inclusion of monetary transfers. In \citet{bergemann2018design, bergemann2022selling} this problem is framed as designing an auction, differing from \citet{babaioff2012optimal} in requiring the seller to commit to a mechanism \textit{ex-ante}. We adopt this framework, which follows the Bayesian model of decision-making under uncertainty from \citet{blackwell1951comparison, blackwell1953equivalent}, where the buyer purchases a communication rule, or a so-called Blackwell experiment, rather than paying for a message revealed \textit{ex-post} by the seller. Unlike standard information design, which assumes a common prior \citep{taneva2019information, mathevet2020information} or an omniscient mediator, we study elicitation of private information and it's influence on what information is sold. 
\citet{cai2020sell} and \citet{ravindranath2024data} build on \citet{bergemann2018design}, focusing on efficient algorithms for approximating optimal menus.

These prior works do not consider externalities. 
One common externality arises when a third party sells information to multiple competing buyers, as illustrated in Figure~\ref{fig:trading-information-third-party}, making the value of information to one buyer dependent on what others receive. 
\citet{rodriguez2024strategic} extend \citet{bergemann2018design} to a multi-buyer setting, where the optimal menu depends on buyers' strategic behavior and the correlation of their private information. Externalities are also modeled explicitly in \citet{bimpikis2019information}, \citet{agarwal2024towards}, and \citet{bonatti2024selling}. The latter studies a two-player game with binary actions and state space, akin to a discrete Hotelling competition \citep{hotelling1929}, where firms choose locations without knowing consumer distributions. Here, the seller, informed about the distribution, exploits asymmetries between firms to design optimal communication rules and shows that limiting the set of buyers that receive information can maximize profit.

We focus on a related but orthogonal problem of selling information not \textit{to}, but \textit{amongst} competing firms, illustrated in Figure~\ref{fig:trading-information}. \citet{gradwohl2023selling} study a similar problem, framed as a variant of Hotelling's model where firms compete on price, but only one firm knows consumer locations. As in \citet{raith1996general}, they find that whilst full disclosure harms both firms, selling a subset of information can benefit both. Similarly, \citet{castiglioni2023selling} model the sale of information to a budget-constrained competitor using information design, characterizing an optimal menu of communication rules via a polynomial-time linear relaxation. However, the buyer holds no private information about the state, which may not be true in practice.

In our more general setup, the buyer refines their \textit{own} private beliefs, rather than simply updating a common prior, using information from the seller. This can also be viewed as a variant of Hotelling’s model, where firms compete on location and one firm knows the true customer distribution while the other faces uncertainty. Overall, our contributions to the literature are threefold: (i) we offer a first step toward modeling practical analytics markets; (ii) we characterize incentives for information sharing in competitive environments as a function of competition intensity; and (iii) we generalize \citet{bergemann2018design} to incorporate competition between buyer and seller, adapt the frameworks of \citet{bimpikis2019information}, \citet{agarwal2024towards}, and \citet{bonatti2024selling} by embedding the seller within the downstream game, and extend the settings of \citet{gradwohl2023selling} and \citet{castiglioni2023selling} by allowing buyers to hold private information about the state.

\section{Model} \label{sec:model}
We model two firms competing in a market as an incomplete information game parameterized by a random variable $X$ with range $\mathcal{X} \subseteq \mathbb{R}$, where each $x \in \mathcal{X}$ is a realizable state of the world. 
Each player selects an action $a_i \in \mathcal{A}_i$ and receives \textit{ex-post} utility $u_i : \mathcal{A} \times \mathcal{X} \mapsto \mathbb{R}$, for every $i \in \{b, s\}$, where subscripts $b$ and $s$ are the indices of the buyer and seller, respectively, so $\mathcal{A} = \mathcal{A}_b \times \mathcal{A}_s$.
We restrict our analysis to binary games, with states $\mathcal{X} = \{0, 1\}$ and action sets $\mathcal{A}_i = \{0, 1\}$.
Each player's utility is assumed to consist of a nonnegative term $u_i^{+}$ which depends only on their own action, and a nonpositive term $u_i^{-}$ which depends only on the action of the other player, such that $u_i(\boldsymbol{a}, x) = u_i^{+}(a_i, x) + u_i^{-}(a_j, x)$. Thus, for every $(\boldsymbol{a}, x)$ pair, we can subtract a state-dependent constant so that, without loss of generality, player $i$ seeks to match their action $a_i$ to the state $x$ and $u_i$ reduces to
\begin{equation}
    u_i (\boldsymbol{a}, x; \tau) = \Ind{a_i = x} - \tau \Ind{a_j = x},
    \label{eq:utility-function}
\end{equation}
where $\Ind{\cdot}$ is the indicator function, returning $1$ if the statement $\{\cdot\}$ is true, and $0$ otherwise. The parameter $\tau \in \mathbb{R}_{\geq 0}$ encodes the intensity of competition between the two players, reducing the payoff if the other player also chooses the correct action. 

As $X$ is binary, the common prior is a Bernoulli distribution parameterized by $v \in \mathbb{R}_{[0, 1]}$, with
\begin{align*}
    p(x; v) = x(1 - v) + (1 - x)v,
\end{align*}
with $v$ the prior probability $P(X=0)$.\footnote{For a random variable $X$, its cumulative distribution function is $F(x) = P(X \leq x)$.
If $X$ is continuous, it has a density function $p(x)$, satisfying $F(x) = \int_{-\infty}^{x} p(z)dz$, and if $X$ is discrete, $p$ is a mass function and $F(x) = \sum_{z \leq x} p(z)$.}

\paragraph{Private Information.}
Each player also receives a private signal $s_{i} \in \mathcal{S}_{i}$, with $s_{i} \sim p(s_{i} \vert x)$ it's distribution given the state. We assume that private information stems from independent sources, such that $p(s_{b}, s_{s} \vert x) = p(s_{b} \vert x) p(s_{s} \vert x)$, a common assumption in analysis of incomplete information games.
Each player uses their private signals together with the common prior to form updated beliefs via Bayes' rule. With a Bernoulli prior, the posterior retains the same form for any likelihood, so player $i$'s posterior is
\begin{equation}
    \begin{aligned}
    p(x \vert s_{i}; v_{i}) &= \frac{p(s_{i} \vert x) p(x; v)}{\sum_{x^{\prime} \in \mathcal{X}} p(s_{i} \vert x^{\prime}) p(x^{\prime}; v)} \\
    &= x(1 - v_{i}) + (1 - x)v_{i},
    \end{aligned}
    \label{eq:posterior-update}
\end{equation}
with $v_{i} \in \mathcal{V}_i$ player $i$'s posterior probability $P(X = 0)$, where $\mathcal{V}_i = \mathbb{R}_{[0, 1]}$.
Assuming the likelihood is well-defined and non-zero for at least one $x \in \mathcal{X}$, a proper prior ensures a proper posterior and that the posterior update is a martingale. In other words, the expectation of the posterior belief, taken over the marginal distribution of player $i$'s private signal $s_i$, equals the prior belief, that is, $\mathbb{E}_{s_i \sim p(s_{i})}[p(x \vert s_{i}; v_{i})] = p(x; v)$.\footnote{A discrete-time \textit{martingale} is a stochastic process (i.e., a sequence of random variables indexed by time $t$), $\{X_t\}_{t \geq 1}$, that satisfies the following conditions: $\mathbb{E}[|X_t|] < \infty$ for every $t \geq 1$; and $\mathbb{E}[X_{t+1} \vert X_1, \dots, X_t] = X_t$ for every $t \geq 1$.}
Hence, the posterior assigns positive probability only to states that had positive prior probability, i.e., no state is plausible if it was ruled out \textit{a priori}.
For further intuition about the role of private information, with competition parameterized by $\tau$, we provide the following example.

\paragraph{Illustrative Example.}
\textit{Consider a variant of Hotelling’s spatial competition model where, instead of customers being uniformly distributed along an interval, all are concentrated at one of two locations, $x \in \{0, 1\}$. Two competing firms must choose a location $a_i \in \{0, 1\}$ (interpreted as selecting what or where to produce). Firm $i$ knows the true location and always chooses $a_i = x$. If $X = 1$ and firm $j$ chooses incorrectly, which occurs when $v_j \geq 1/2$, then firm $i$ captures the entire market, so $u_i = 1$ and $u_j = 0$. If instead $v_j < 1/2$, firm $j$ also chooses the correct location, leading to competition: both firms split the market, with each earning $u_i = u_j = 1 - \tau$. The parameter $\tau$ captures the intensity of competition; as $\tau$ increases, competition reduces profits, culminating in a zero-sum game at $\tau = 1$. For $\tau > 1$, the cost of competition outweighs the benefits of choosing the correct action, creating a prisoners’ dilemma.}

\begin{remark}
    Even though we treat $\tau$ as constant, our setup readily extends to cases where $\tau : \mathcal{X} \mapsto \mathbb{R}_{\geq 0}$ is a function of the state (e.g., differentiated products) or a function of the other player's action. One may also consider players to have coordination incentives, with $\tau < 0$. In this case, each player would benefit from both themselves and the other player choosing the action the matches the state.
\end{remark}

\paragraph{The Seller's Problem.}
We further assume that before the players choose an action, the seller observes an additional signal that reveals the true state. The seller can then offer (possibly noisy) messages to the buyer in exchange for money. In game-theoretic terms, the seller acts as a mediator, sending messages $m \in \mathcal{M}$ conditioned on the state.\footnote{Its not essential that the seller observes the true state; it suffices that they can send messages correlated with it.}
However, unlike in standard information design, the seller is not omniscient, as they do not know the buyer’s private information. This setup naturally leads to an information design problem with elicitation, as the seller must first elicit a report from the buyer and then tailor their message. As we include monetary transfers, we frame this problem as one of joint information and mechanism design.

From a mechanism design view, a player's posterior belief $v_{i} \in \Delta(\mathcal{X})$ serves as their \textit{type}, which can be viewed as a realization of a random variable $V_i$. The common prior and the distribution over signals together induce a distribution over interim beliefs $p(v_i) \in \Delta(\Delta(\mathcal{X}))$, a distribution over distributions, determined by the topology of the set of priors, $\Delta(\mathcal{X})$, and the set of distributions over signals, $\Delta(\mathcal{S}_{i})$.
The buyer seeks to refine their beliefs by acquiring additional information from the seller, with their valuation depending on their private type $v_b$ which is unknown to the seller. To elicit this information, the seller asks the buyer to submit a bid $b \in \mathcal{B}$. Let $M$ denote a discrete random variable with support $\mathcal{M}$. 
The seller commits to an \textit{ex-ante} communication rule $p(m \vert x; b) \in \Delta(\mathcal{M})$, specifying a distribution over messages conditional on the state $x$ for each possible bid $b$. Each communication rule is paired with a price, determined by a transfer function $t : \mathcal{B} \mapsto \mathbb{R}_{\geq 0}$. 
 
Upon receiving message $m \in \mathcal{M}$, the buyer again updates their beliefs via Bayes' rule, with posterior
\begin{align*}
    p(x \vert m, s_{b}) &= \frac{p(m \vert x; b) p(x \vert s_{b}; v_{b})}{\sum_{x^{\prime} \in \mathcal{X}} p(m \vert x^{\prime}; b) p(x^{\prime} \vert s_{b}; v_{b})} \\
    &= x(1 - \theta_{m}^{b}) + (1 - x)\theta_{m}^{b},
\end{align*}
which again is a Bernoulli distribution, but now parameterized by $\theta_{m}^{b} \in \mathcal{V}_b$, the updated posterior probability $P(X=0)$ of the buyer upon receiving message $m$ provided they bid $b$.

\paragraph{Timing.}
Viewing the interaction between the buyer and the seller as an extensive form game, the timing is:
\begin{enumerate}
    \item Each player observes their private type $v_{i} \in \mathbb{R}_{[0, 1]}$.
    \item The seller offers a menu of communication rules and the buyer reports bid $b \in \mathcal{B}$.
    \item State $x \in \{0, 1\}$ is realized and then the buyer receives message $m \sim p(m \vert x; b)$.
    \item The buyer updates their beliefs to $\theta_{m}^{b}$ and pays $t(b)$ to the seller.
    \item Both players choose an action $a_{i} \in \{0, 1\}$ and obtain \textit{ex-post} utility $u_i (\boldsymbol{a}, x)$.
\end{enumerate}

\paragraph{Equilibrium.}
The optimal strategy $\sigma : \mathcal{V}_{i} \mapsto \mathcal{A}_i$ defines a mapping from types to actions.
Given the utility function (\ref{eq:utility-function}) this is simply to pick the action that matches the state with the largest probability mass, such that 
\begin{equation}
    \sigma (z) = \argmax_{x \in \{0, 1\}} \, p(x \vert s_{i}; z) = \Ind{z < \frac{1}{2}}, 
    \label{eq:optimal-strategy}
\end{equation}
for every $z \in \mathcal{V}_i$. We use $z$ to denote the type here since the buyer's may be their prior $v_b$ or $\theta_{m}^{b}$ after receiving message $m$.
As $\sigma$ depends only on each player's own type, neither can improve their \textit{ex-post} utility by unilaterally deviating. Thus, \eqref{eq:optimal-strategy} is a dominant strategy, such that
%
\begin{equation*}
    u_i(\sigma(z), \sigma(v_{s}), x; \tau) \geq u_i(\boldsymbol{a}, x; \tau),
\end{equation*}
for every $\boldsymbol{a} \in \mathcal{A}$, $z \in \mathcal{V}_b$, and $v_{s} \in \mathcal{V}_s$.
Of course, as the seller observes the state, their optimal strategy is $\sigma : \mathcal{X} \mapsto \mathcal{A}_s$, where $\sigma(x) = x$. However, this is simply a form of (\ref{eq:optimal-strategy}) after updating beliefs to $P(x) = 1$.

\begin{remark}
    From the perspective of information design, we have a single mediator and a single receiver, and are thus in a one-player Bayesian Persuasion case as in \citet{kamenica2019bayesian}, and a communication rule reduces to a communication rule in the sense of \citet{blackwell1951comparison, blackwell1953equivalent}.
\end{remark}

In summary, the buyer seeks to refine their beliefs by acquiring additional information from the seller. The seller's goal is to design a mechanism, a menu of communication rules and corresponding prices, that maximizes expected profit. In the following section, we outline the constraints that limit the space of feasible menus available to the seller.

\section{Mechanism Characteristics} \label{sec:mechanism-characteristics} 
To determine the space of feasible menus, we first characterize the buyer's valuation of a given communication rule, framed as the \textit{gain} in expected utility with and without the additional information from the seller. This characterization allows us to determine under which conditions the buyer has an incentive to participate and to truthfully report their type. The latter incentive is crucial: reasoning over a generic bid space $\mathcal{B}$ is intractable due to the wide range of possible strategic behaviors, but the revelation principle of \citet{myerson1981optimal} allows us to consider only direct mechanisms where buyers report types truthfully. Thus, we may assume $\mathcal{B} = \mathcal{V}_b$, provided the mechanism is incentive compatible.
In other words, bid $b$ can be viewed as a realization of a $V_b$-measurable random variable, allowing the mechanism to infer the buyer’s true type $v_b$.

\paragraph{Value of Communication Rules}
In order to quantify the value of a communication rule for type $v_{b}$, we first derive the buyer's expected utility using only their private information as follows:
\begin{align*}
    &\mathbb{E}_{\mathcal{X}, \mathcal{V}_s} \left[ u_{b} (\boldsymbol{a}, x; \tau) \right] \\
    &= \sum_{x \in \mathcal{X}} p(x \vert s_{b}; v_{b}) \int_{\mathcal{V}_{s}} p(v_{s}) u_{b} (\boldsymbol{a}, x; \tau)  d v_{s} \\
    &= \sum_{x \in \mathcal{X}} p(x \vert s_{b}; v_{b}) \bigg( u_{b}^{+} (\sigma(v_{b}), x) \\
    &\quad + \int_{\mathcal{V}_{s}} p(v_{s}) u_{b}^{-} (\sigma(v_{s}), x; \tau) d v_{s} \bigg) \\
    &= \mathbb{E}_{\mathcal{X}} \left[ u_{b}^{+} (\sigma(v_{b}), x) \right] + \mathbb{E}_{\mathcal{X}, \mathcal{V}_{s}} \left[ u^{-}_{b} (\sigma(v_{s}), x; \tau) \right],
\end{align*}
where given the separability of (\ref{eq:utility-function}), only the nonpositive part of the utility depends on uncertainty in the seller's private information $v_{s}$.
After purchasing a communication rule by reporting bid $b$, the buyer's expected utility conditioned on message $m$ is given by
\begin{align*}
    \mathbb{E}_{\mathcal{X}, \mathcal{V}_{s}} \left[ u_{b} (\boldsymbol{a}, x; \tau) \vert m \right] &= \mathbb{E}_{\mathcal{X}} \left[ u_{b}^{+} (\sigma(\theta_{m}^{b}), x) \vert m \right] \\
    &\quad + \mathbb{E}_{\mathcal{X}, \mathcal{V}_{s}} \left[ u^{-}_{b} (\sigma(v_{s}), x; \tau)\right],
\end{align*}
hence, the expected utility given a particular communication rule is calculated by integrating over the set of all possible messages, such that
\begin{align*} 
    &\mathbb{E}_{\mathcal{M}, \mathcal{X}, \mathcal{V}_{s}}  \left[ u_{b} (\boldsymbol{a}, x; \tau) \right] \\
    &= \sum_{m \in \mathcal{M}} p(m; b, v_{b})  \mathbb{E}_{\mathcal{X}, \mathcal{V}_{s}} \left[ u_{b} (\boldsymbol{a}, x; \tau) \right \vert m] \\
    &= \mathbb{E}_{\mathcal{M}, \mathcal{X}} \left[ u_{b}^{+} (\sigma(\theta_{m}^{b}), x)  \right] + \mathbb{E}_{\mathcal{X}, \mathcal{V}_{s}} \left[ u^{-}_{b} (\sigma(v_{s}), x; \tau) \right],
\end{align*}
which, in theory, could be intractable if the message space $\mathcal{M}$ is unbounded. However, we now define a special subset of communication rules that limits it's cardinality.
\begin{definition}[Direct Communication Rule]
    A communication rule is \textit{direct} if every message with positive probability leads to a different optimal choice of action.
\end{definition}
\begin{proposition} \label{prop:direct-communication-rules}
    Every communication rule induced by the optimal menu can be direct.
\end{proposition}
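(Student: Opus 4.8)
The plan is to run a revelation-principle-style coarsening argument. Given any optimal menu $\{(p(\cdot \mid \cdot; b), t(b))\}_{b \in \mathcal{B}}$, I would replace each communication rule $p(m \mid x; b)$ by the rule $\bar p(\cdot \mid \cdot; b)$ on the two-element message space $\{0,1\}$ defined by $\bar p(a \mid x; b) = \sum_{m \,:\, \sigma(\theta_m^b) = a} p(m \mid x; b)$ — that is, merge every message that induces the same optimal action for the \emph{intended} type $b$ into a single recommendation of that action — while keeping the transfer $t(b)$ unchanged. Because $\mathcal{A}_b = \{0,1\}$, the resulting rule has at most two messages, one leading to action $0$ and one to action $1$, so by construction it is direct. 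It then remains only to verify that the new menu is still feasible (incentive compatible and individually rational) and yields the seller exactly the same profit, so that it is also optimal.

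First, consider a truthful, obedient buyer of type $v_b = b$. Regrouping the messages in the rule-dependent term $\mathbb{E}_{\mathcal{M},\mathcal{X}}[u_b^+(\sigma(\theta_m^b), x)]$ according to their recommended action shows it equals the corresponding expectation under $\bar p$, \emph{provided} the buyer still finds it optimal to obey each recommendation. This holds because the posterior attached to the merged message ``recommend $a$'' is the $p(m; b)$-weighted average of the posteriors $\{\theta_m^b : \sigma(\theta_m^b) = a\}$, hence lies on the same side of $1/2$ as each of them, so $\sigma$ still prescribes $a$ (ties at $\theta_m^b = 1/2$ are handled consistently by the convention in \eqref{eq:optimal-strategy}). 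Since the externality term $\mathbb{E}_{\mathcal{X},\mathcal{V}_s}[u_b^-(\sigma(v_s), x; \tau)]$ does not depend on the communication rule at all, the buyer's valuation — and hence the participation constraint — is preserved; likewise the seller's expected payoff is preserved, since $t(b)$ is fixed and the distribution of the buyer's action, the only channel through which the rule affects the seller's utility (the seller plays $a_s = x$), is unchanged.

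Second, for incentive compatibility I would argue that coarsening can only make a misreport less attractive. If a buyer of true type $v_b$ reports $b' \neq v_b$, then under $\bar p(\cdot \mid \cdot; b')$ they observe a coarsening of what $p(\cdot \mid \cdot; b')$ would have revealed; but any behavior achievable under $\bar p$ is also achievable under $p$ by ignoring the extra information and responding only to the recommendation $\sigma(\theta_m^{b'})$, so the deviation utility under the new menu is at most that under the original menu (this comparison is with respect to the buyer's \emph{true} posterior, computed from $v_b$). Chaining this with the original menu's IC constraint and the equality established above for truthful reports yields IC of the new menu. Hence the new menu is feasible, achieves the same profit, and is therefore optimal, and every communication rule it induces is direct.

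I expect the incentive-compatibility step to be the main obstacle: one must be careful that the coarsening applied to the rule sold for bid $b'$ is the one tailored to type $b'$, so that it is action-sufficient for the intended buyer yet merely a garbling for every other type, and that the ``ignore the extra information'' comparison is phrased for the deviator's true beliefs rather than the intended ones. The remaining points — consistency of tie-breaking and the fact that a countable message space can be partitioned into two classes without disturbing the (absolutely convergent, nonnegative) sums involved — are routine.
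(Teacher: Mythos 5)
Your coarsening argument is correct and is essentially the paper's own proof: both merge the messages that induce the same optimal action for the intended type into a single recommendation (a garbling), observe that the intended type's value and the seller's payoff are unchanged, and argue that the garbled rule is weakly less informative for every other type, so incentive compatibility and individual rationality are preserved. The only cosmetic difference is that the paper invokes Blackwell's theorem as a black box for the final step, whereas you re-derive the relevant direction directly via the ``any behavior achievable under the coarser rule is achievable under the finer one'' observation.
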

\begin{proof}
    Provided in Appendix~\ref{app:direct-communication-rules}.
\end{proof}
A result of Proposition~\ref{prop:direct-communication-rules} is that it suffices for the seller to only consider direct communication rules where $\mathcal{M}$ has cardinality equal to that of $\mathcal{A}_b$, in our case, $|\mathcal{M}| = 2$. 
The proof follows the insight of \citet{blackwell1951comparison, blackwell1953equivalent} that information is only valuable insofar that it changes the receiver’s action. If $\mathcal{M} = \{m_{0}, m_{1}\}$, a communication rule is two probabilities: $P(m_{0} \vert X = 0; b)$ and $P(m_{1} \vert X = 1; b)$, as illustrated in Table~\ref{tab:binary-communication-rule}.
\begin{table}[!ht]
    \caption{Tabular illustration of a direct binary communication rule.}
    \label{tab:binary-communication-rule}
    \vskip 0.15in
    \begin{center}
    \renewcommand{\arraystretch}{1.5}
    \begin{tabular}{l|cc}
        & $m_{0}$ & $m_{1}$ \\ 
        \midrule 
        $X = 0$ & $P(m_{0} \vert X = 0; b)$ & $1-P(m_{0} \vert X = 0; b)$ \\ 
        $X = 1$ & $1-P(m_{1} \vert X = 1; b)$ & $P(m_{1}  \vert X = 1; b)$ 
    \end{tabular}
    \end{center}
    \vskip -0.1in
\end{table}

Given the ultimate aim to match ones action with the state, it is natural to order messages so that $m_{0}$ and $m_{1}$ recommend actions $a_{b} = 0$ and $a_{b} = 1$, respectively. This is achieved by ensuring $\theta_{m_{0}}^{b} \geq 1 - \theta_{m_{1}}^{b}$, as recall the dominant strategy is to choose the action corresponding to the state with largest probability mass. In this case, 
\begin{align}
    P(m_{0} \vert X = 0; b) b &\geq P(m_{0} \vert X = 1; b) (1-b), \label{eq:recommends-action-0}
\end{align}
and
\begin{align}
    P(m_{1} \vert X = 0; b) b &\leq P(m_{1} \vert X = 1; b) (1-b), \label{eq:recommends-action-1}
\end{align}
thereby inducing the respective actions.\footnote{Inequalities (\ref{eq:recommends-action-0}) and (\ref{eq:recommends-action-1}) compare unnormalized posteriors, but since both sides of the inequality share the same denominator, we need not write it explicitly.} Lastly, given the parametrization in Table~\ref{tab:binary-communication-rule}, any communication rule can be fully characterized by it's \textit{informativeness}, denoted $I(b) \in \mathbb{R}_{[-1, 1]}$, defined as
\begin{equation*}
    I (b) = P(m_{0} \vert X = 0; b) - P(m_{1}  \vert X = 1; b),
\end{equation*}
which we illustrate in Figure~\ref{fig:one-dimensional-communication-rule}. The center, $I(b) = 0$, is the fully informative communication rule that invariably reveals the true state. In each half, one state is revealed with certainty, and the other gets increasingly obfuscated towards the boundary.
The boundary identifies extreme points where one message occurs with probability $1$, which we later show is the same as revealing no information to the buyer. 

\begin{figure}[!t]
    \centering
    \resizebox{\columnwidth}{!}{
    \begin{tikzpicture}
            \node [draw=none] at (-4, 0) {$I$};
            \draw[thick] (-3.5,0.2)--(-3.5,-0.2) node[anchor=north] {$-1$};
            \draw[thick] (0,0.2)--(0,-0.2) node[anchor=north] {$0$};
            \draw[thick] (3.5,0.2)--(3.5,-0.2) node[anchor=north] {$1$};
            \draw[thick] (-3.5,0)--(3.5,0);
            \node at (-1.75,0.3) {\footnotesize Reveal $X=1$};
            \node at (-1.75,-0.3) {\footnotesize Obfuscate $X=0$};
            \node at (1.75, 0.3) {\footnotesize Reveal $X=0$};
            \node at (1.75,-0.3) {\footnotesize Obfuscate $X=1$};
        \end{tikzpicture}
        }
    \caption{One-dimensional informativeness.}
    \label{fig:one-dimensional-communication-rule}
\end{figure}
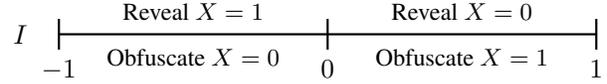

We define the gain $\delta$ of communication rule $I$ for a certain type $v_b$ provided they bid $b$ as the difference between the buyer's expected utility with and without additional information from the seller such that
\begin{equation}
    \begin{aligned}
        \delta(I&(b), v_b) \\
        &= \mathbb{E}_{\mathcal{M}, \mathcal{X}} \left[ u_b^{+} (\sigma(\theta_m^b), x) \right] - \mathbb{E}_{\mathcal{X}} \left[ u_b^{+} (\sigma(v_{b}), x) \right],
    \end{aligned} 
    \label{eq:gain-communication-rule}
\end{equation}
which is determined by the quality of the buyer’s private information. 
Lastly, we assume that $v_{b}$ provides no additional information about the distribution of messages conditioned on the state, i.e., the buyer's private signal $s_{b} \in \mathcal{S}_{b}$ and the message received from the seller $m \in \mathcal{M}$ are independent conditioned on the state, meaning they draw their information from independent sources.

In summary, we have reduced the mechanism design problem to finding a menu consisting of: (i) a single-dimensional allocation function, namely, the informativeness of the messages given the bid, $I: \mathcal{B} \rightarrow \mathbb{R}_{[-1, 1]}$; and (ii) a transfer function $t: \mathcal{B} \rightarrow \mathbb{R}$, specifying how much the buyer should pay for this information.

\paragraph{Feasible Menus.}
The use of monetary transfers leads to several key distinctions compared to pure information design. First, the buyer’s willingness to participate depends not only on the gain of the purchased communication rule but also on its price.
\begin{definition}[Individual Rationality]
     A mechanism is \textit{individually rational} if the buyer is guaranteed at least as much expected utility as not participating, provided they report their true type, which, for every $v_{b} \in \mathcal{V}_{b}$, implies
     \begin{equation*}
        \delta(I(v_b), v_b) \geq t(v_b). 
   \end{equation*}
\end{definition}
The transfers also affect incentive compatibility.
\begin{definition}[Truthfulness] \label{def:truthfulness}
   A mechanism is \textit{truthful} if, assuming it is \textit{obedient} as defined below, for every $v_{b} \in \mathcal{V}_{b}$ and $b \in \mathcal{B}$, it holds that
    \begin{equation*}
       \delta(I(v_b), v_b) - t(v_b) \geq \delta(I(b), v_b) - t(b).
   \end{equation*}
\end{definition}

Each truthfully reported type also has to be willing to follow the recommended action.
\begin{definition}[Obedience] \label{def:obedience}
     A mechanism is \textit{obedient} if the buyer's best response is the seller's action recommendation, provided they report their type truthfully, such that for every $a_{b} \in \mathcal{A}_{b}$ and $v_{b} \in \mathcal{V}_{b}$, it holds that
     \begin{equation*}
       \mathbb{E}_{\mathcal{M}, \mathcal{X}} \left[ u_{b}^{+} (\sigma(\theta_{m}^{v_{b}}), x)  \right] \geq \mathbb{E}_{\mathcal{M}, \mathcal{X}} \left[ u_{b}^{+} (a_{b}, x)  \right].
    \end{equation*}
\end{definition}

Any obedient mechanism induces a distribution of actions that is Bayes correlated equilibrium \citep{bergemann2016bayes}. Both truthfulness and obedience together make the mechanism robust to double-deviations, where the buyer misreports their private type \textit{and} deviates from the seller’s recommendation. 
Therefore, together these constraints form incentive compatibility.
\begin{definition}[Incentive Compatibility]
   A mechanism is \textit{incentive compatible} if
   \begin{align*}
    \mathbb{E}_{\mathcal{M}, \mathcal{X}} \left[ u_b^{+} (\sigma(\theta_m^{v_b}), x)  \right] - \ & t(v_b) \\
    &\geq \mathbb{E}_{\mathcal{M}, \mathcal{X}} \left[ u_b^{+} (a_b, x)  \right] - t(b),
   \end{align*}
   for every $a_b \in \mathcal{A}_b$, $v_{b} \in \mathcal{V}_{b}$ and $b \in \mathcal{B}$.
\end{definition}

Satisfying these individual rationality and incentive compatibility constraints limits the space of feasible menus available to the seller. In the next section, we formulate the seller's objective and incorporate these constraints to fully characterize the optimal mechanism design problem.

\section{Profit-Maximizing Mechanism} \label{sec:optimal-mechanism}
To derive the seller's objective, we first return to the value of a communication rule to simplify the expressions for the buyer's gain and the seller's cost. This allows us to finally design and analyze the optimal mechanism, first considering a simpler setting with two buyer types before generalizing to a continuous type space.

\paragraph{The Buyer's Gain.}
With the dominant strategy in (\ref{eq:optimal-strategy}), the expected value of the nonnegative term of the buyer's utility function before any information sharing reduces to
\begin{equation}
    \mathbb{E}_{\mathcal{X}} \left[ u_b^{+} (\sigma(v_{b}), x) \right] = \left(v_{b} \vee (1-v_{b}) \right), 
    \label{eq:value-no-information}
\end{equation}
and similarly, after purchasing a communication rule,
\begin{align*}
    \mathbb{E}_{\mathcal{M}, \mathcal{X}}  \big[ u_b^{+} (\sigma(&\theta_m^b, x) \big] = \\
    &\sum_{m \in \mathcal{M}} p(m; b, v_b) \left( \theta_m^b \vee (1-\theta_m^b) \right).
\end{align*}
where $a \vee b = \max(a, b)$.
Next, we use these expressions to simplify the characterization of the informativeness of a communication rule.
\begin{proposition}
\label{prop:concentrated-communication-rules}
    Every communication rule in the optimal menu will reveal at least one of the states with certainty.
\end{proposition}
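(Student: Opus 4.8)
The plan is to reduce, via Proposition~\ref{prop:direct-communication-rules}, to direct binary communication rules, which by Table~\ref{tab:binary-communication-rule} are a pair $(\alpha,\beta):=\big(P(m_0\mid X=0;b),\,P(m_1\mid X=1;b)\big)\in\mathbb{R}_{[0,1]}^2$. Fix a reported type $v_b$ and, relabeling the states if necessary, assume without loss $v_b\le 1/2$, so that by \eqref{eq:optimal-strategy} the buyer's action absent any message is $a_b=1$. I would first show that whenever the rule is obedient for $v_b$, i.e. \eqref{eq:recommends-action-0}--\eqref{eq:recommends-action-1} hold so that $m_0$ genuinely flips the action to $0$, the buyer's gain \eqref{eq:gain-communication-rule} equals $\delta(I(b),v_b)=\alpha v_b+\beta(1-v_b)-(1-v_b)$, using \eqref{eq:value-no-information} and the dominant strategy. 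A short computation using $\sigma(x)=x$ for the seller and the consistency (martingale) property $P(X=0\mid V_b=v_b)=v_b$ then shows that the seller's expected externality from this menu entry is $\tau\big(\alpha v_b+\beta(1-v_b)\big)$. Hence both the buyer's and the seller's payoffs from this entry depend on the rule only through the scalar $G:=\alpha v_b+\beta(1-v_b)=\delta(I(b),v_b)+(1-v_b)$ --- the probability that the buyer's post-message action matches the state --- and one checks that obedience for $v_b$ holds automatically on the whole set $\{\alpha v_b+\beta(1-v_b)=G\}\cap\mathbb{R}_{[0,1]}^2$ once $G\ge 1-v_b$.

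Given this, for any target $G$ the feasible rules realizing it form a segment on which the seller is indifferent, so I would replace the rule of each type by the endpoint with $\alpha=1$, namely $(\alpha,\beta)=\big(1,(G-v_b)/(1-v_b)\big)$, which lies in $\mathbb{R}_{[0,1]}^2$ because $1-v_b\le G\le 1$, degenerates to the no-information rule at $G=1-v_b$ and to the fully informative rule at $G=1$, and --- having $P(m_0\mid X=0)=1$ --- reveals state $0$ with certainty, i.e. lies on one of the two arms through the centre of Figure~\ref{fig:one-dimensional-communication-rule}; for $v_b>1/2$ the symmetric choice $\beta=1$ reveals state $1$. Performing this substitution for every type leaves each type's gain, transfer, and induced externality unchanged, hence preserves individual rationality and the seller's objective.

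The crux is confirming that the substituted menu remains incentive compatible, i.e. robust to double deviations. The clean route is not to inherit the old transfers but to re-derive them: within each congruence class the concentrated family is one-dimensional (indexed by $\alpha$), and since a concentrated recommendation leaves a same-class deviator no strictly profitable disobedience, that deviator's value is exactly $\alpha(b)\,v'$, linear in its type, so the standard envelope/monotonicity argument makes the menu globally incentive compatible whenever the induced allocation $\alpha(\cdot)$ is monotone, and, for the same gain schedule, it extracts weakly more revenue than the original menu, whose deviators could exploit the interior rules. Establishing this comparison --- monotonicity of the induced allocation, the absence of profitable cross-class double deviations, and hence that the restriction to concentrated rules is without loss of optimality --- is the main obstacle, and is where the case analysis on the position of the deviating type relative to $1/2$ and to $v_b$ has to be carried out.
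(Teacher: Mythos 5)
The central computational claim in your argument is incorrect, and the error is load-bearing. You assert that the seller's expected externality from a menu entry $(\alpha,\beta)$ is $\tau(\alpha v_b + \beta(1-v_b))$, derived by conditioning on $P(X=0\mid V_b = v_b) = v_b$. But the seller integrates over the state using their \emph{own} interim belief $v_s$, not the buyer's reported type: the expected externality is a linear combination of $\alpha$ and $\beta$ with weights governed by $v_s$ (compare (\ref{eq:expected-externality}) and the derivation in Appendix~\ref{app:optimal-mechanism-continuous}, which carry $v_s$ throughout). Consequently, whenever $v_s \neq v_b$ the buyer's gain $G = \alpha v_b + \beta(1-v_b)$ and the seller's cost are pinned down by \emph{different} linear functionals of $(\alpha,\beta)$, and the seller is \emph{not} indifferent along a level set of $G$. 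Your whole substitution step---``the feasible rules realizing $G$ form a segment on which the seller is indifferent, so replace by the endpoint with $\alpha = 1$''---therefore does not go through as stated. A second, related gap: even after repairing the cost formula, linearity only guarantees that \emph{some} endpoint of the segment weakly dominates the interior point, and which endpoint that is depends on $v_s$ and $\tau$, not only on which side of $1/2$ the reported type $v_b$ lies; fixing the replacement to always be $\alpha = 1$ for $v_b\le 1/2$ could strictly lower the seller's profit, breaking the ``without loss of optimality'' chain.

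For comparison, the paper's proof avoids committing to a particular surplus-preserving direction of movement. It works in the $(I_0, I)$ polytope, invokes the affine-function-on-a-polytope fact (Lemma~\ref{lemma:concentrated}) to obtain a boundary point on the level set of the seller's \emph{objective itself}, and then argues geometrically that any such line segment must meet the ceiling $\bar{\partial}\mathcal{S}$. The key structural difference is that the paper moves along the objective's level set (so the seller's value is unchanged by construction), whereas you move along the buyer's-gain level set (which only freezes the transfer side and requires the seller's cost to also be constant, which you have not established). Your instinct to reduce the problem to a one-dimensional family and push to an endpoint is sound and is morally the same linearity argument; the fix is to move along a direction that holds the buyer's gain fixed while noting that the cost is affine, so that either endpoint of the resulting segment weakly dominates, and \emph{both} endpoints are concentrated. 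Finally, you correctly flag the incentive-compatibility part as the crux and leave it open; even in the paper's treatment this step is handled by first reducing to the one-parameter informativeness $I$ and then re-imposing feasibility via Proposition~\ref{prop:necessary-and-sufficient-conditions}, so the argument you sketch there (envelope/monotonicity for the concentrated family) would still need to be carried out before the proof is complete.
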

\begin{proof} 
    Provided in Appendix~\ref{app:concentrated-communication-rules}.
\end{proof}

\begin{figure}[!t]
  \begin{center}
    \includegraphics[width=\columnwidth]{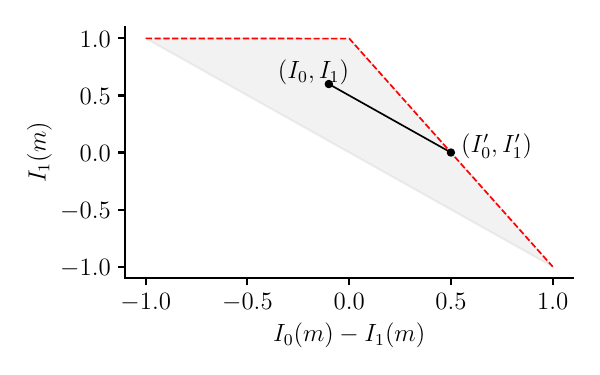}
  \end{center}

  \caption{Feasible region of $I_0 (b)$ and $I_1 (b)$. The ceiling of the feasible region is highlighted in red.}
  \label{fig:concentrated-communication-rules}
\end{figure}

To provide intuition for this proof let $I_0 (b) = P (m_0 \vert X = 0; b)$ and $I_1 (b) = P (m_1  \vert X = 1; b)$ fully define a communication rule as in Table~\ref{tab:binary-communication-rule}, with $I_0 (b), I_1 (b) \in \mathbb{R}_{[0, 1]}$. By Proposition~\ref{prop:direct-communication-rules}, $I_0 (b) + I_1 (b) \geq 1$, for which the feasible region is plotted in Figure~\ref{fig:concentrated-communication-rules}. It follows that in the optimal mechanism, all feasible objective values with corresponding solution $(I_0^\ast, I_1^\ast)$ define a level set which contains a solution $(I_0^\prime, I_1^\prime)$ on the ceiling of the feasible region, on which either measure is always $1$, revealing the corresponding state with certainty. It is this line that describes the informativeness $I(b) \in \mathbb{R}_{[-1, 1]}$ of the communication rule.
%
\begin{corollary} \label{corr:value-communication-rule-cases}
    From Proposition~\ref{prop:concentrated-communication-rules}, the gain $\delta$ of a communication rule $I(b)$ to type $v_b$ provided they report $b$ as described in (\ref{eq:gain-communication-rule}) can be written as
    \begin{equation}
        \begin{aligned}
             \delta(I(b), v_b) &= 1 - \left(v_b \vee (1-v_b)\right) \\
             &\quad - I(b) \left( \Ind{I(b) \geq 0} - v_b \right).
            \label{eq:gain-communication-rule-expanded}
        \end{aligned}
    \end{equation}
\end{corollary}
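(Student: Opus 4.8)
The plan is to expand the gain $\delta(I(b),v_b)$ from its definition in \eqref{eq:gain-communication-rule} using the simplified expressions for the buyer's expected utility derived above, and then invoke Proposition~\ref{prop:concentrated-communication-rules} to pin down the posteriors $\theta_m^b$ and message probabilities in terms of a single parameter $I(b)$. Recall that $\delta(I(b),v_b) = \mathbb{E}_{\mathcal{M},\mathcal{X}}[u_b^{+}(\sigma(\theta_m^b),x)] - \mathbb{E}_{\mathcal{X}}[u_b^{+}(\sigma(v_b),x)]$, and that the second term equals $v_b \vee (1-v_b)$ by \eqref{eq:value-no-information}. So the whole task is to show that the first term, the expected post-information utility, equals $1 - I(b)(\Ind{I(b)\geq 0} - v_b)$.

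First I would use Proposition~\ref{prop:concentrated-communication-rules} to restrict attention to communication rules lying on the ceiling of the feasible region in Figure~\ref{fig:concentrated-communication-rules}: either $I_0(b) = P(m_0 \mid X=0; b) = 1$ or $I_1(b) = P(m_1 \mid X=1; b) = 1$. Consider the case $I(b) \geq 0$, so $I_0(b) = 1$ and $I_1(b) = 1 - I(b)$; the other case is symmetric (swapping the roles of the two states, which is why the indicator $\Ind{I(b)\geq 0}$ appears). When $I_0(b)=1$, message $m_1$ is sent only when $X=1$, so $\theta_{m_1}^b = 0$: upon receiving $m_1$ the buyer is certain the state is $1$ and obtains $u_b^{+} = 1$. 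Message $m_0$ is sent whenever $X=0$ and with probability $1-I_1(b) = I(b)$ when $X=1$; computing its marginal probability $p(m_0; b, v_b)$ under the prior $v_b = P(X=0)$ gives $v_b + I(b)(1-v_b)$, and the induced posterior $\theta_{m_0}^b = v_b / (v_b + I(b)(1-v_b))$. Since the dominant strategy \eqref{eq:optimal-strategy} picks the larger mass, the contribution from $m_0$ is $p(m_0;b,v_b)\big(\theta_{m_0}^b \vee (1-\theta_{m_0}^b)\big) = v_b \vee I(b)(1-v_b)$ after the denominator cancels. Adding the $m_1$ contribution $p(m_1;b,v_b)\cdot 1 = (1-v_b)(1-I(b)) $, one collects terms to obtain $1 - I(b)(1-v_b) = 1 - I(b)(\Ind{I(b)\geq 0} - v_b)$ for $I(b)\geq 0$; I should double-check the edge behavior of the $\vee$ here, but under the ordering convention \eqref{eq:recommends-action-0}--\eqref{eq:recommends-action-1} the recommended action is always the majority one so the $\vee$ resolves consistently. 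Subtracting \eqref{eq:value-no-information} yields \eqref{eq:gain-communication-rule-expanded}.

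For the case $I(b) < 0$ I would repeat the computation with $I_1(b) = 1$ and $I_0(b) = 1 + I(b)$, using $|I(b)| = -I(b)$ as the obfuscation level on state $X=0$; by the state-swap symmetry the post-information utility becomes $1 - |I(b)| v_b = 1 + I(b) v_b = 1 - I(b)(\Ind{I(b)\geq 0} - v_b)$ since $\Ind{I(b)\geq 0}=0$ here. The two cases combine into the single formula \eqref{eq:gain-communication-rule-expanded}, and subtracting the no-information value \eqref{eq:value-no-information} finishes the proof.

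The main obstacle I anticipate is bookkeeping the $\max$ (the $\vee$) correctly: one must verify that in each case the recommended action genuinely coincides with the posterior-majority action, i.e. that $\theta_{m_0}^b \geq 1/2$ need not hold but $\theta_{m_0}^b \geq 1 - \theta_{m_1}^b$ does, so that the ordering of messages in \eqref{eq:recommends-action-0}--\eqref{eq:recommends-action-1} is respected and the $\vee$ terms telescope cleanly rather than producing a piecewise expression in $v_b$. Once the message marginals and posteriors are written explicitly this collapses, but it is the step where an off-by-a-term error is easiest to make, so I would present that cancellation carefully.
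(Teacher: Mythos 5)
Your proposal is correct and follows essentially the same route as the paper: both compute $\mathbb{E}_{\mathcal{M},\mathcal{X}}[u_b^+(\sigma(\theta_m^b),x)]$ by applying Proposition~\ref{prop:direct-communication-rules} to reduce to two messages, use the ordering constraints \eqref{eq:recommends-action-0}--\eqref{eq:recommends-action-1} to resolve the $\vee$'s, and then invoke Proposition~\ref{prop:concentrated-communication-rules} to write $I_0, I_1$ in terms of the single parameter $I(b)$, finally subtracting \eqref{eq:value-no-information}. The only cosmetic difference is that you split explicitly into the two cases $I(b)\gtrless 0$, whereas the paper first simplifies the general expression to $v_b I_0(b) + (1-v_b)I_1(b) = I_1(b) + v_b I(b)$ and then substitutes $I_1(b) = 1 - I(b)\Ind{I(b)\geq 0}$ in one step; the computations are equivalent.
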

\begin{figure}[!t]
  \begin{center}
    \includegraphics[width=\columnwidth]{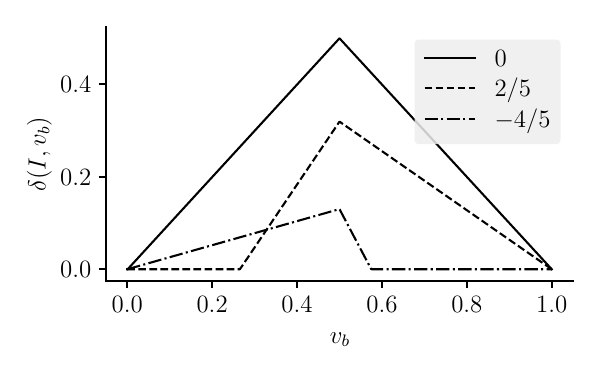}
  \end{center}

  \caption{Gain as a function and buyer types, as described in (\ref{eq:gain-communication-rule-expanded}), for $I=0$ (solid), $I=1/2$ (dashed) and $I=-1/2$ (dotted).}
  \label{fig:value-communication-rule}
\end{figure}
This expression matches that for the buyer's valuation in \citet{bergemann2018design}, as dominant strategies ensure that selling information to a competitor only affects $u_b^{+}$ and $u_s^{-}$, the latter of which we later show only appears in the objective function. To remain self-contained, Figure~\ref{fig:value-communication-rule} shows the gains from several communication rules, illustrating how the value of information varies with the buyer's type. Observe that:
(i) the fully informative communication rule is the most valuable for every type, with $\delta(I, v_b) = 1/2$; 
(ii) the type with the highest willingness to pay is that which is least informed (i.e, $v_{b} = 1/2$) and those with the lowest willingness to pay are those that are most informed (i.e, $v_{b} \in \{0, 1\}$); 
and (iii) the distance $|v_b - 1/2|$ alone is not a sufficient statistic for the value of information, as asymmetries arise based on interim beliefs. 

These observations imply that these information commodities are inherently two-dimensional, comprising both \textit{quality} and \textit{value}; accuracy is only valuable so long as it changes the buyer's action. This distinction has been well studied in the field of predictive analytics, including in meteorological forecasting \citep{murphy1993good}. It implies that a direct communication rule yields a nonnegative gain: the message either (i) changes the buyer’s best response, leading to a weak utility increase; or (ii) leaves the action unchanged, resulting in zero value. \textit{Hereafter, we consider only truthful mechanisms with $b=v_b$.}
We hence obtain: $-1 \leq I(v_b) \leq v_b / (1-v_b)$ if $v_{b} \leq 1/2$ and $-(1-v_b)/v_b \leq I(v_b) \leq 1$ if $v_{b} > 1/2$. Hence, communication rules $I(v_{b})\in\{-1,1\}$ indeed reveal no information, as the types able to receive these would already take the recommended action without additional information. 

\paragraph{The Seller's Cost.}
For a given communication rule, the seller's cost is the expected externality incurred when the buyer chooses the correct action, which can be written as:%
\begin{equation}
    \begin{aligned}
        c (I (v_b)&; \tau) \\
        &= \mathbb{E}_{\mathcal{R}, \mathcal{X}} \left[ u_s^{-} (\sigma(\theta_{r}^{v_b}), x; \tau)\right] \\
        &= \tau v_s + \tau (1-2v_s)(1-v_s) \\
        &\quad - \tau (1 - 2v_s) I (v_b) \left(v_s + \Ind{I (v_b) \geq 0} \right),
    \label{eq:expected-externality}
    \end{aligned}
\end{equation}
for which a full derivation is provided in Appendix~\ref{app:optimal-mechanism-continuous}.

\paragraph{Mechanism Design.}
The objective of the mechanism is to maximize the seller's expected profit, i.e., the expected revenue from the transfer minus the expected cost, with the expectation taken over the type space of the buyer. 

\begin{proposition}[Optimal Mechanism] \label{prop:optimal-mechanism-general} The optimal mechanism maximizes the seller's profit subject to individual rationality and incentive compatibility, which can be expressed as the following mathematical program:
\begin{subequations}
    \label{eq:optimal-mechanism-general}
    \begin{align} 
        \max_{I, t} \quad & \mathbb{E}_{\mathcal{V}_{b}} [ t(v_b) -  c (I (v_b); \tau) ] \label{eq:expected-profit} \\
        \textrm{s.t.} \quad 
        & I(v_b) \in \mathbb{R}_{[-1, 1]}, \, t(v_b) \in \mathbb{R}_{\geq 0} \label{eq:variable-bounds} \\
        & \delta(I(v_b), v_b) - t(v_b) \geq 0 \label{eq:participation} \\
        & \delta(I(v_b), v_b) - t(v_b) \geq \delta(I(b), v_b) - t(b) \label{eq:incentive}
    \end{align}
\end{subequations}
where (\ref{eq:expected-profit}) is the expected profit, (\ref{eq:variable-bounds}) encodes the bounds on the allocations and transfers, and (\ref{eq:participation}) and (\ref{eq:incentive}) encode individual rationality and incentive compatibility, respectively. Each constraint is defined for every $v_b \in \mathcal{V}_b$, as well as every $b \in \mathcal{B}$ for the incentive constraint (\ref{eq:incentive}).
\end{proposition}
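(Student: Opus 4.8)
The plan is to treat \eqref{eq:optimal-mechanism-general} as a faithful reformulation of the seller's profit-maximization problem, assembled from the structural reductions already established, rather than as a fresh optimization argument; the proof then has two directions, that the optimal mechanism's value is no larger than the program's optimum and no smaller. First I would invoke the revelation principle of \citet{myerson1981optimal}: any equilibrium outcome of an arbitrary menu-and-bidding game is replicated by a direct mechanism in which $\mathcal{B} = \mathcal{V}_b$ and the buyer reports $b = v_b$, so it is without loss to optimize over pairs $(I, t)$ subject to a truthfulness constraint. Combined with Proposition~\ref{prop:direct-communication-rules} and Proposition~\ref{prop:concentrated-communication-rules}, this collapses each communication rule to its scalar informativeness $I(v_b)$ and the message space to $\mathcal{M} = \mathcal{A}_b$, so the allocation is genuinely one-dimensional; the range $I(v_b) \in \mathbb{R}_{[-1,1]}$ and the nonnegativity of $t$ are then exactly \eqref{eq:variable-bounds}.

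Next I would derive the objective. By the separability in \eqref{eq:utility-function}, the seller's \textit{ex-post} payoff splits into $u_s^{+}(a_s, x)$, which depends only on the seller's own perfectly informed action and hence is a constant the menu cannot influence, and $u_s^{-}(a_b, x; \tau)$, whose expectation over messages and states at the truthful report is precisely the cost $c(I(v_b); \tau)$ of \eqref{eq:expected-externality}. Dropping the additive constant and taking expectations over $\mathcal{V}_b$ gives \eqref{eq:expected-profit}. On the buyer's side the term $u_b^{-}(\sigma(v_s), x; \tau)$ does not vary with the purchased communication rule, since the seller's action is independent of $m$, so it cancels in every "with information" versus "without information" comparison; what remains is exactly the gain $\delta(I(v_b), v_b)$ of Corollary~\ref{corr:value-communication-rule-cases}. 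Individual rationality, comparing the buyer's interim utility under truthful participation against the no-purchase benchmark, therefore reduces to \eqref{eq:participation}.

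For the incentive constraint I would show that Definitions~\ref{def:truthfulness} and \ref{def:obedience} together are equivalent to \eqref{eq:incentive}. The key point is that obedience lets one evaluate a deviating report $b$ via $\mathbb{E}_{\mathcal{M},\mathcal{X}}[u_b^{+}(\sigma(\theta_m^b), x)]$, so that the only profitable class of deviations is the double deviation (misreport $b$, then best-respond), whose value is bounded by $\delta(I(b), v_b) - t(b)$ once the constant $u_b^{-}$ term is restored; equating on-path and off-path utilities yields \eqref{eq:incentive}, quantified over all $b \in \mathcal{B}$ and $v_b \in \mathcal{V}_b$. Conversely, any $(I,t)$ feasible for \eqref{eq:optimal-mechanism-general} is implemented by the direct mechanism offering $\{(I(v_b), t(v_b))\}$, which is obedient by Proposition~\ref{prop:concentrated-communication-rules} and the ordering in \eqref{eq:recommends-action-0}--\eqref{eq:recommends-action-1}, truthful by \eqref{eq:incentive}, and individually rational by \eqref{eq:participation}; this closes the two-way reduction.

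The step I expect to be the main obstacle is the incentive-constraint equivalence: one must argue carefully that restricting simultaneously to direct communication rules, truthful reports, and obedient play is without loss, and that the single inequality \eqref{eq:incentive} is robust to the joint misreport-and-disobey deviation rather than only to the two deviations separately. This reduces to showing that $\delta(I(b), v_b)$ — the gain an honest type $v_b$ extracts from the rule designed for $b$ — upper-bounds the value of the best double deviation, which follows from the fact that $\sigma$ in \eqref{eq:optimal-strategy} is the unconditional best response for every belief together with the martingale property of the posterior update, but should be spelled out explicitly.
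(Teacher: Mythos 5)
The paper gives no separate proof for this proposition; it is presented as a problem formulation that assembles the structural reductions already established in Sections~\ref{sec:mechanism-characteristics}--\ref{sec:optimal-mechanism} (revelation principle, direct communication rules, the gain $\delta$, and the externality cost $c$). Your argument supplies the verification the paper leaves implicit, and its architecture---revelation principle to justify $\mathcal{B} = \mathcal{V}_b$, separability of $u_s$ into the menu-independent $u_s^{+}$ and the menu-dependent $u_s^{-}$ giving $c$, cancellation of the $u_b^{-}$ term so that the relevant comparison is $\delta$, and the observation that truthfulness and obedience jointly reduce to \eqref{eq:incentive}---is consistent with what the paper relies on.

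Two points should be tightened. First, your appeal to ``the martingale property of the posterior update'' to justify double-deviation robustness is a red herring; what actually closes the gap is that $\delta$ as defined in \eqref{eq:gain-communication-rule} already evaluates the buyer at $\sigma(\theta_m^b)$, the pointwise best response to the posterior induced by the rule designed for $b$. Consequently $\delta(I(b), v_b) - t(b)$ does not merely \emph{upper-bound} the best joint misreport-and-disobey deviation---it \emph{equals} it, which is what makes \eqref{eq:incentive} the full IC constraint rather than a surrogate. (If one instead reads $\delta$ as the closed-form expression \eqref{eq:gain-communication-rule-expanded}, that formula can understate the true best-response gain when the misreported rule is not obedient for type $v_b$, so ``upper bound'' would actually have the wrong sign; the resolution is to use the definitional form.) Second, your claim that the implemented direct mechanism is obedient ``by Proposition~\ref{prop:concentrated-communication-rules} and the ordering \eqref{eq:recommends-action-0}--\eqref{eq:recommends-action-1}'' is incomplete: the box constraint \eqref{eq:variable-bounds} only gives $I(v_b) \in \mathbb{R}_{[-1,1]}$, whereas obedience for the truthful report requires the type-dependent tighter bounds the paper notes after Corollary~\ref{corr:value-communication-rule-cases}. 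What actually enforces these is \eqref{eq:participation} together with $t(v_b) \geq 0$: any $I(v_b)$ outside the obedient range makes $\delta(I(v_b), v_b) < 0$ and hence violates individual rationality. Making this explicit completes the ``feasible in \eqref{eq:optimal-mechanism-general} $\Rightarrow$ implementable'' direction of your two-way reduction.
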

We note that, it is the inclusion of the cost described in (\ref{eq:expected-externality}) in the objective that generalizes existing works by allowing us to model the purchase of information from a competitor as opposed to a third party.
In the following, we solve this problem for binary types to build intuition into the optimal menu, then generalize to continuous type spaces.

\subsection{Binary Type Space}
Suppose $\mathcal{V}_b = \{v_b^l, v_b^h\}$, with the high type $v_b^h$ assigning a higher valuation to the fully informative communication rule than the low type, such that $\delta(0, v_b^h) > \delta(0, v_b^l)$. Without additional information, each type chooses the action that matches the state with the highest prior probability, meaning in this case the high type places lower prior probability on the state they deem most likely. In other words, the high type has lower precision (i.e., is less well informed), where we define the precision of a type as it's distance from $1/2$, such that $\vert v_b^h - 1/2 \vert < \vert v_b^l - 1/2 \vert$. We denote the probability of the high type as $\phi = P(v_b^h)$. 

\begin{corollary} \label{corr:optimal-mechanism-binary} From Proposition~\ref{prop:optimal-mechanism-general}, with binary types the the mechanism design problem in (\ref{eq:optimal-mechanism-general}) reduces to:
\begin{subequations}
    \label{eq:optimal-mechanism-binary}
    \begin{align} 
        \label{eq:binary-objective-function} 
        \max_{I^k, t^k} \quad & \phi (t^h - c^h) + (1 - \phi) (t^l - c^l)\\
        \textrm{s.t.} \quad 
        & I^k \in \mathbb{R}_{[-1, 1]}, \, t^k \in \mathbb{R}_{\geq 0} \\
        & \delta(I^k, v_b^k) - t^k \geq 0 \label{eq:participation-binary} \\
        & \delta(I^k, v_b^k) - t^k \geq \delta(I^{k^\prime}, v_b^k) - t^{k^\prime} \label{eq:incentive-binary}  
    \end{align}
\end{subequations}   
where $I^k = I(v_b^k)$, $t^k = t(v_b^k)$ and $c^k = c(I^k)$ denote the allocation, transfer and externality cost for type $k \in \{l, h\}$, respectively, and the objective in (\ref{eq:binary-objective-function}) denotes the expected profit. Constraints (\ref{eq:incentive-binary}) and (\ref{eq:participation-binary}) are the incentive and individual rationality constraints, respectively. Constraints are defined for every $k \in \{l, h\}$, as well as every $k^\prime \in \{l, h\}$ where $k \neq k^\prime$ for the incentive constraint (\ref{eq:incentive-binary}).
\end{corollary}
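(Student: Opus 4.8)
The plan is to obtain Corollary~\ref{corr:optimal-mechanism-binary} as a direct specialization of the program in Proposition~\ref{prop:optimal-mechanism-general} to a type distribution $p(v_b)$ supported on the two points $\{v_b^l, v_b^h\}$ with $P(v_b^h) = \phi$ and $P(v_b^l) = 1-\phi$. First I would rewrite the objective~\eqref{eq:expected-profit}: for a two-point support, $\mathbb{E}_{\mathcal{V}_b}[g(v_b)] = \phi\, g(v_b^h) + (1-\phi)\, g(v_b^l)$ for any integrand $g$, so taking $g(v_b) = t(v_b) - c(I(v_b);\tau)$ and adopting the shorthand $I^k = I(v_b^k)$, $t^k = t(v_b^k)$, $c^k = c(I^k;\tau)$ for $k \in \{l,h\}$ reproduces exactly~\eqref{eq:binary-objective-function}. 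Since the bound constraint~\eqref{eq:variable-bounds} and the individual rationality constraint~\eqref{eq:participation} in Proposition~\ref{prop:optimal-mechanism-general} are imposed for every $v_b \in \mathcal{V}_b$, each collapses to the corresponding pair of conditions over $k \in \{l,h\}$, namely $I^k \in \mathbb{R}_{[-1,1]}$, $t^k \in \mathbb{R}_{\geq 0}$ and the participation constraint~\eqref{eq:participation-binary}.

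It remains to handle the incentive constraint~\eqref{eq:incentive}. By the revelation principle of \citet{myerson1981optimal}, already invoked in Section~\ref{sec:mechanism-characteristics} to set $\mathcal{B} = \mathcal{V}_b$, it is without loss to take $\mathcal{B} = \{v_b^l, v_b^h\}$, so a menu is literally the pair of items $(I^l, t^l)$ and $(I^h, t^h)$, and a report $b = v_b^{k'}$ is the act of selecting item $k'$. Instantiating~\eqref{eq:incentive} at type $v_b^k$ and report $b = v_b^{k'}$ gives $\delta(I^k, v_b^k) - t^k \geq \delta(I^{k'}, v_b^k) - t^{k'}$; the diagonal case $k' = k$ is the trivial identity $0 \geq 0$ and may be dropped, leaving only the off-diagonal inequality with $k \neq k'$, which is precisely~\eqref{eq:incentive-binary}. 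Obedience need not be listed separately because, by Proposition~\ref{prop:direct-communication-rules} and the recommendation inequalities~\eqref{eq:recommends-action-0}--\eqref{eq:recommends-action-1}, every communication rule in the optimal menu is direct and therefore self-enforcing; hence truthfulness and individual rationality are the only feasibility conditions, exactly as carried over from Proposition~\ref{prop:optimal-mechanism-general}. Since both programs then have identical feasible sets and objectives, their optimal menus coincide, which is the claimed reduction.

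I do not anticipate a genuine obstacle: the corollary is a bookkeeping step, and the substantive optimization is deferred to the subsequent two-type analysis and to Corollary~\ref{corr:optimal-mechanism-continuous}. The only points deserving a line of justification are (i) that the off-path allocation and transfer values $I(b), t(b)$ for $b \notin \{v_b^l, v_b^h\}$ are irrelevant once the revelation principle restricts attention to direct mechanisms, and (ii) the accompanying claim that the high type is the less precise one, $|v_b^h - \frac{1}{2}| < |v_b^l - \frac{1}{2}|$: from~\eqref{eq:gain-communication-rule-expanded} the value of the fully informative rule is $\delta(0, v_b) = 1 - (v_b \vee (1-v_b)) = \min(v_b, 1-v_b) = \frac{1}{2} - |v_b - \frac{1}{2}|$, which is strictly decreasing in $|v_b - \frac{1}{2}|$, so $\delta(0, v_b^h) > \delta(0, v_b^l)$ is equivalent to the stated precision ordering.
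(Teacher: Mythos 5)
Your proposal is correct and takes the same approach the paper implicitly relies on: Corollary~\ref{corr:optimal-mechanism-binary} is a direct specialization of the program in Proposition~\ref{prop:optimal-mechanism-general} to a two-point type distribution, with the expectation becoming a $\phi$-weighted sum and the pointwise constraints collapsing to the pair $k \in \{l,h\}$ (plus dropping the trivial $k'=k$ incentive inequality). Your two closing remarks are also sound: obedience is already absorbed into $\delta$ through the optimal response $\sigma(\theta_m^{b})$ and the direct-rule inequalities~\eqref{eq:recommends-action-0}--\eqref{eq:recommends-action-1}, and the precision ordering $|v_b^h - \tfrac12| < |v_b^l - \tfrac12|$ follows from $\delta(0,v_b) = \tfrac12 - |v_b - \tfrac12|$ being decreasing in $|v_b - \tfrac12|$.
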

We now distinguish between congruent and noncongruent types; two types are congruent if they would select the same action without additional information.

\paragraph{Congruent Types.}
\begin{figure}[!t]
    \centering
    \resizebox{\columnwidth}{!}{
    \begin{tikzpicture}
            \node [draw=none] at (-4, 0) {$v_b$};
            \draw[thick] (-3.5,0.2)--(-3.5,-0.2) node[anchor=north] {$0$};
            \draw[thick] (0,0.2)--(0,-0.2) node[anchor=north] {$1/2$};
            \draw[thick] (3.5,0.2)--(3.5,-0.2) node[anchor=north] {$1$};
            \draw[thick] (-3.5,0)--(3.5,0);
            \draw[line width=0.7mm, color=black] (1.25, 0.2)--(1.25,-0.2) node[color=black, anchor=north] {$2/3$};
            \draw[line width=0.7mm, color=black] (2.5,0.2)--(2.5,-0.2) node[color=black, anchor=north] {$5/6$};
            \node[label={[black]:$v_b^h$}] at (1.25,0.15) {};
            \node[label={[black]:$v_b^l$}] at (2.5,0.15) {};
        \end{tikzpicture}
        }
    \caption{A congruent binary type distribution. Without additional information, both types will take the same action, that is, $\sigma (v_b^l) = \sigma (v_b^h) = 0$.}
    \label{fig:distribution-congruent}
\end{figure}
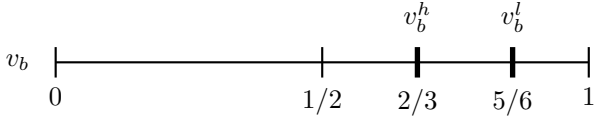
Hereafter, we will assume that the high type selects $\sigma(v_b^h) = 0$ based on their prior information (i.e., $v_b^h > 1/2$). However, our results can be easily adapted to the opposite case.
The types are congruent if $1/2 < v_b^h < v_b^l$, as illustrated in Figure~\ref{fig:distribution-congruent}. In this setting, by Proposition~\ref{prop:concentrated-communication-rules}, so as to not reduce the information of the buyer relative to not participating, it is necessary that $I^l, I^h \geq 0$, meaning that the state $X=0$ must be revealed with certainty.
Moreover, as is typical in monopolistic screening problems, both the individual rationality constraint for $v_b^l$ and the incentive compatibility constraint for $v_b^h$ bind. Under these conditions, the problem in (\ref{eq:optimal-mechanism-binary}) simplifies to the following linear program:
\begin{align*} 
    \max_{I^l, I^h} \quad & \phi (t^h - c^h) + (1 - \phi) (t^l - c^l) \\
     \textrm{s.t.} \quad 
        & 0 \leq I^l \leq 1 \\
        & 0 \leq I^h \leq 1,
\end{align*}
 %
with $t^l = (1 - I^l) (1 - v_b^l)$ and $t^h = (1 - v_b^h)(I^l - I^h) + t^l$.
This problem can be solved analytically, revealing that it is optimal for the seller to either allocate all information or none at all, such that both $I^l, I^h \in \{0, 1\}$. The optimal menu depends on the distribution of buyer types, as well as the magnitude of competition encoded by $\tau$ and the seller's own beliefs $v_s$. 

The linear program yields two decision boundaries, $\tau^l$ and $\tau^h$, that determine the sign of the allocation as a function of competition intensity for the low and high type, given by
\begin{align*}
    \tau^l = \frac{(1 - v_b^l) - \phi (1 - v_b^h)}{(1 - \phi) (v_s - 1)(2 v_s - 1) },
\end{align*}
and
\begin{align*}
    \tau^h =  \frac{1 - v_b^h}{(v_s - 1)(2 v_s - 1)},
\end{align*}
with $\tau^l < \tau^h$, which we plot in Figure~\ref{fig:decision-boundaries-congruent} for $v_b^l = 5/6$ and $v_b^h = 2/3$.

If the seller's beliefs are opposite to the buyer's (i.e., $v_s \leq 1/2$), the solution is $I^l = \Ind{\tau > \tau^l}$ and $I^h = \Ind{\tau > \tau^h}$.
The seller believes the buyer would choose the wrong action without additional information. 
Therefore, irrespective of $\phi$, there is a point at which $v_s$ is so far in the opposite direction of $v_b$, that if $\tau$ is sufficiently high, both $I^l=1$ and $I^l=1$. Here, $\tau > \tau^h$ and the expected cost of revealing the state deemed most likely by the seller is too much relative to the expected transfer such that no information is sold at all, even to $v_b^h$. Information is only sold to $v_b^l$ when the increase in expected transfer outweighs the increase in cost such that $\tau < \tau^l$. This happens when the intensity of competition $\tau$ is sufficiently low (Figure~\ref{fig:decision-boundaries-congruent-a}) or probability of the high type $\phi$ is sufficiently high (Figure~\ref{fig:decision-boundaries-congruent-c}).

If $v_s > 1/2$, the seller reveals the state to $v_b^h$ irrespective of $\tau$ as their beliefs are now in agreement. Ideally, the seller would minimize costs by obfuscating only $X=0$ (i.e, set $I^h < 0$) to force the buyer to pick the \textit{wrong} action at the expense of social welfare, however this would violate the obedience constraint as the buyer's gain would be negative by following this message. 
Information is only sold to $v_b^l$ when $\tau > \tau^l$.
As the probability of the high type increases, $\phi \to 1$, the area enclosed by the solid black line in Figure~\ref{fig:decision-boundaries-congruent-c} shrinks, and no information is sold to the low type. This ensures incentive compatibility by preventing the high type from mimicking the low type.
%

\begin{figure*}[!t]
    \centering
    \begin{subfigure}[]{0.32\textwidth}
        \centering
        \includegraphics[width=\columnwidth]{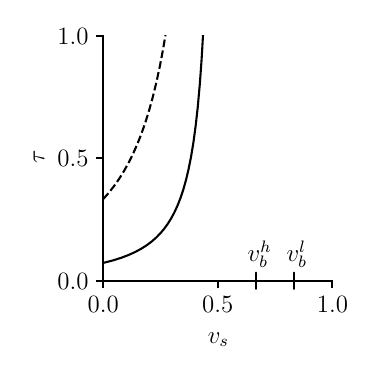}
        \caption{$\phi = 1/3$}
        \label{fig:decision-boundaries-congruent-a}
    \end{subfigure}
    \hspace{1mm}
    \begin{subfigure}[]{0.32\textwidth}
        \centering
        \includegraphics[width=\columnwidth]{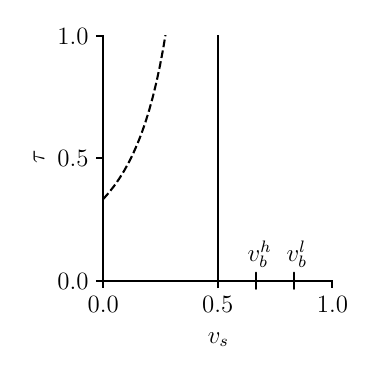}
        \caption{$\phi = 1/2$}
        \label{fig:decision-boundaries-congruent-b}
    \end{subfigure}
    \hspace{1mm}
    \begin{subfigure}[]{0.32\textwidth}
        \centering
        \includegraphics[width=\columnwidth]{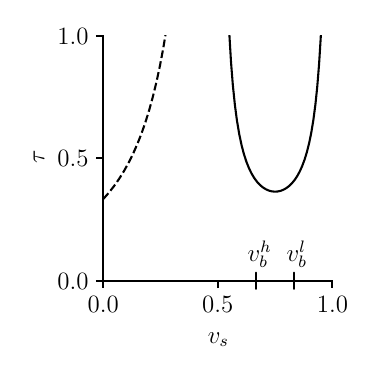}
        \caption{$\phi = 2/3$}
        \label{fig:decision-boundaries-congruent-c}
    \end{subfigure}
    \caption{Decision boundaries $\tau^h$ (dashed) and $\tau^l$ (solid) for congruent types.}
    \label{fig:decision-boundaries-congruent}
\end{figure*}

\paragraph{Noncongruent Types.}
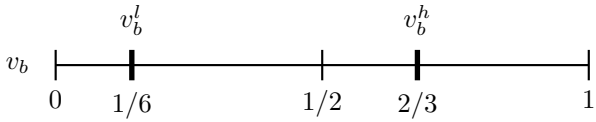
\begin{figure}[!t]
    \centering
    \resizebox{\columnwidth}{!}{
    \begin{tikzpicture}
            \node [draw=none] at (-4, 0) {$v_b$};
            \draw[thick] (-3.5,0.2)--(-3.5,-0.2) node[anchor=north] {$0$};
            \draw[thick] (0,0.2)--(0,-0.2) node[anchor=north] {$1/2$};
            \draw[thick] (3.5,0.2)--(3.5,-0.2) node[anchor=north] {$1$};
            \draw[thick] (-3.5,0)--(3.5,0);
            \draw[line width=0.7mm, color=black] (1.25, 0.2)--(1.25,-0.2) node[color=black, anchor=north] {$2/3$};
            \draw[line width=0.7mm, color=black] (-2.5,0.2)--(-2.5,-0.2) node[color=black, anchor=north] {$1/6$};
            \node[label={[black]:$v_b^h$}] at (1.25,0.15) {};
            \node[label={[black]:$v_b^l$}] at (-2.5,0.15) {};
        \end{tikzpicture}
        }
    \caption{A noncongruent binary type distribution. Without additional information, types take different actions, that is, $\sigma (v_b^l) = 1$ and $\sigma (v_b^h) = 0$.}
    \label{fig:distribution-noncongruent}
\end{figure} 
The two types are noncongruent if $v_b^l < 1/2 < v_b^h$, as illustrated in Figure~\ref{fig:distribution-noncongruent}. 
As the low type would instead select $\sigma(v_b^l) = 1$ without additional information, $I^l \leq 0$ to ensure individual rationality. Without competition, revenue is maximized by offering full information to the high type and partial information to the low type, where the quality of this partial information is prescribed as that which makes the incentive compatibility constraint of the high type bind. However, this does not hold generally, and in our case, depending on the intensity of competition $\tau$, the seller may offer less information to both types. Therefore, for noncongruent types, the problem in (\ref{eq:optimal-mechanism-binary}) reduces to the following linear program:
\begin{align*} 
    \max_{I^l, I^h} \quad & \phi (t^h - c^h) + (1 - \phi) (t^l - c^l) \\
     \textrm{s.t.} \quad 
        & \underbar{$I$}^l \leq I^l \leq 0 \\
        & 0 \leq I^h \leq 1,
\end{align*}
%
with $t^l = v_b^l (1 + I^l)$, $t^h = (1 - v_b^h) (1 - I^h)$ and $\underbar{$I$}^l = (2 v_b^h - 1) / (v_b^h - v_b^l) - 1$. 
This problem can also be solved analytically, revealing that $I^l = \underbar{$I$}^l(1 - \Ind{\tau > \tau^l})$ and $I^h = \Ind{\tau > \tau^h}$ where
\begin{align*}
    \tau^l = \frac{v_b^l}{v_s (2 v_s - 1)},
\end{align*}
and
\begin{align*}
    \tau^h =  \frac{1 - v_s^h}{(1 - v_s)(1 - 2 v_s)},
\end{align*}
which we plot in Figure~\ref{fig:decision-boundaries-noncongruent} for $v_b^l = 1/6$ and $v_b^h = 2/3$.

If $v_s < 1/2$, there exists a point where $v_s$ diverges sufficiently from $v_b^h$ and $\tau > \tau^h$, so no information being sold to the high type. However, the seller may still offer partial information to the low type if it remains profitable without violating incentive constraints. %
The information offered to $v_b^l$ depends on the relative precision (i.e., distance from $1/2$) of the types. As the low type's prior becomes less informative, approaching that of $v_b^h$, they receive more information due to their increased willingness to pay.  

If $v_s > 1/2$, the high type receives full information, whilst the low type only if $\tau < \tau^l$.  
For both types, the boundary $\tau$ beyond which no information is provided decreases with precision, as precise types derive less value from additional information, so their willingness to pay no longer exceed the cost of providing the information.

\begin{figure}[!t]
  \begin{center}
    \includegraphics[width=\columnwidth]{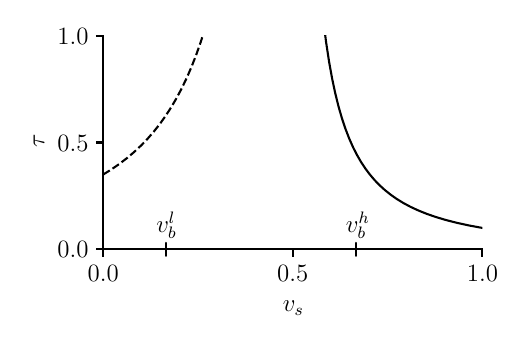}
  \end{center}

    \caption{Decision boundaries $\tau^h$ (dashed) and $\tau^l$ (solid) for noncongruent types.}
  \label{fig:decision-boundaries-noncongruent}
\end{figure}

\subsection{Continuous Type Space}
We now generalize to a type space that is continuous on the unit interval, such that the distribution $F(v_b)$ has support on $\mathcal{V}_b = \mathbb{R}_{[0, 1]}$ with density $p(v_b)$. To obtain analytical insights, we begin with a key result from prior work, which extends \citet{myerson1981optimal}'s finding that incentive compatibility and individual rationality can be summarized by simpler, more compact constraints on the allocation $I(v_b)$ with transfers $t(v_b)$ pinned down by the utility of the lowest type. We then generalize this work by incorporating the seller’s expected cost into the objective and derive closed-form results.

\begin{proposition}[\citealp{bergemann2018design}]  \label{prop:necessary-and-sufficient-conditions}
    Individual rationality and incentive compatibility hold as long as $I(v_b)$ is non-decreasing and $\int_{\mathcal{V}_b} I(v_b) dv_b = 0$.
\end{proposition}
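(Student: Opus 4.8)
The plan is to run the Myersonian envelope argument, adapted to the piecewise–linear value $\delta$ of Corollary~\ref{corr:value-communication-rule-cases}, the only real complication being that the buyer's type space has a kink at $v_b=\tfrac12$ (the least informed type), so the role usually played by a single ``lowest type'' is here shared by the two endpoints $v_b\in\{0,1\}$. I work throughout with truthful direct mechanisms (Proposition~\ref{prop:direct-communication-rules}), so that $\mathcal{B}=\mathcal{V}_b$; obedience then holds by construction, since in a direct communication rule each message recommends an action that is optimal under the posterior it induces, so only individual rationality and truthfulness remain to be secured.

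I first record the facts about $\delta$ that drive the argument. From \eqref{eq:gain-communication-rule-expanded}, for fixed $I$ the map $v_b\mapsto\delta(I,v_b)$ is continuous (manifest from the $\max$ term), affine on $(0,\tfrac12)$ and on $(\tfrac12,1)$ with $\partial_{v_b}\delta(I,v_b)=1+I$ on the former piece and $=I-1$ on the latter, and — crucially — $\partial_{v_b}\delta(I',v_b)-\partial_{v_b}\delta(I,v_b)=I'-I$ \emph{uniformly in $v_b$}, because the $\Ind{I\ge0}$ terms do not involve $v_b$. This uniform cross-difference is the single-crossing property, and the fact that it survives the kink is what lets the classical reasoning go through.

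For the main (sufficiency) direction, take $I$ non-decreasing with $\int_{\mathcal{V}_b}I(v_b)\,dv_b=0$, define $\psi(z)=1+I(z)$ for $z<\tfrac12$ and $\psi(z)=I(z)-1$ for $z>\tfrac12$, set $U(v_b)=\int_0^{v_b}\psi(z)\,dz$, and let $t(v_b)=\delta(I(v_b),v_b)-U(v_b)$. Then (i) $\int_0^1\psi=\tfrac12-\tfrac12+\int_0^1 I=0$, so $U(0)=U(1)=0$; (ii) $\psi\ge0$ on $(0,\tfrac12)$ and $\psi\le0$ on $(\tfrac12,1)$ because $I\in[-1,1]$ by \eqref{eq:variable-bounds}, so $U$ rises then falls and hence $U(v_b)\ge\min\{U(0),U(1)\}=0$ — individual rationality; (iii) integrating $\partial_z\delta$ from each endpoint and using the uniform cross-difference gives the two representations $t(v_b)=\delta(I(v_b),0)+\int_0^{v_b}\!\big(I(v_b)-I(z)\big)dz$ and $t(v_b)=\delta(I(v_b),1)+\int_{v_b}^1\!\big(I(z)-I(v_b)\big)dz$, in which both integrals are nonnegative by monotonicity of $I$, and since \eqref{eq:gain-communication-rule-expanded} yields $\delta(I(v_b),0)=0$ whenever $I(v_b)\le0$ and $\delta(I(v_b),1)=0$ whenever $I(v_b)\ge0$, choosing the applicable representation gives $t(v_b)\ge0$; (iv) for any report $b$, writing $\delta(I(b),v_b)-t(b)=U(b)+\big(\delta(I(b),v_b)-\delta(I(b),b)\big)$ and subtracting from $U(v_b)$ leaves $\int_b^{v_b}\!\big(I(z)-I(b)\big)dz$, whose integrand has the sign of $z-b$ throughout the range of integration and is therefore $\ge0$; thus $U(v_b)\ge\delta(I(b),v_b)-t(b)$, which is exactly truthfulness \eqref{eq:incentive}. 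For completeness I will add the converse bookkeeping: summing the truthfulness inequalities for two types and invoking the uniform cross-difference forces $(I(v_b')-I(v_b))(v_b'-v_b)\ge0$, so monotonicity is necessary, and the integral condition is precisely what allows the surplus of \emph{both} extreme types to be driven to zero simultaneously (the slack a profit-maximizing seller would never leave).

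The step I expect to be the main obstacle is the bookkeeping at the kink $v_b=\tfrac12$: one must check that $\delta$ stays continuous there, that the slope cross-difference remains $I'-I$ on both sides so single crossing is not lost, and that $U$ is genuinely single-peaked rather than monotone — which is exactly why the binding normalization is $\int I=0$, equalizing $U(0)=U(1)=0$, rather than a one-sided boundary condition. With those points settled, every step above is a one-line computation from \eqref{eq:gain-communication-rule-expanded}, the bounds \eqref{eq:variable-bounds}, and monotonicity of $I$.
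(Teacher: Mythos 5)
Your argument is correct and follows the same Myersonian envelope route as the paper's proof in Appendix~C: you end up constructing the identical transfer $t(v_b) = I(v_b)\left(v_b - \Ind{I(v_b)\geq 0}\right) - \int_0^{v_b}I(z)\,dz$, with the normalization $\int_{\mathcal{V}_b} I = 0$ playing the role of forcing the rent to vanish at both endpoints $v_b\in\{0,1\}$. Your write-up is in fact slightly more careful than the paper's on two points it glosses over --- the explicit two-representation verification that $t(v_b)\geq 0$ (required by the bounds in~\eqref{eq:variable-bounds}), and the single-peakedness of $U$ on $[0,1]$ pinning down individual rationality --- but these are refinements of the same argument, not a different route.
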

\begin{proof}
    Provided in Appendix~\ref{app:necessary-and-sufficient-conditions} (Sections~\ref{app:necessary-conditions} and \ref{app:sufficient-conditions}).
\end{proof}

\begin{corollary} \label{corr:transfer-function}
    From Proposition~\ref{prop:necessary-and-sufficient-conditions}, the transfer function can be derived using \citet{milgrom2002envelope}'s envelope theorem to be:
    \begin{align}
        t(v_b) = I (v_b) \left(v_b - \Ind{I (v_b) \geq 0} \right) - \int_{0}^{v_b}  I (z) d z.
        \label{eq:transfer-function}
    \end{align}
\end{corollary}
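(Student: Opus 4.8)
The plan is to recover $t(v_b)$ from the buyer's indirect utility through a standard Myersonian envelope argument, made concrete by the closed form for the gain obtained in Corollary~\ref{corr:value-communication-rule-cases}. First I would write the buyer's payoff from reporting $b$ when the true type is $v_b$ as $f(b, v_b) = \delta(I(b), v_b) - t(b)$, and set $U(v_b) = \sup_{b \in \mathcal{B}} f(b, v_b)$. Since we restrict to truthful (incentive compatible) mechanisms, which under the hypotheses of Proposition~\ref{prop:necessary-and-sufficient-conditions} amounts to $I$ non-decreasing with $\int_{\mathcal{V}_b} I(z)\,dz = 0$, the supremum is attained at $b = v_b$, so $U(v_b) = \delta(I(v_b), v_b) - t(v_b)$. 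Hence deriving the transfer reduces to determining $U$.

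Next I would invoke the envelope theorem of \citet{milgrom2002envelope} in its integral form. From (\ref{eq:gain-communication-rule-expanded}), for fixed $b$ the map $v \mapsto \delta(I(b), v)$ is Lipschitz with $|\partial_v \delta(I(b), v)| \le |I(b)| + 1 \le 2$ uniformly in $b$, so the hypotheses of the theorem hold and $U$ is absolutely continuous. Differentiating the closed form in its second argument --- using that $\tfrac{d}{dv}\big(v \vee (1-v)\big)$ equals $-1$ on $(0, 1/2)$ and $+1$ on $(1/2, 1)$, while $\Ind{I(b) \ge 0}$ does not depend on $v$ --- gives $\partial_v \delta(I(b), v) = I(b) + \Ind{v < 1/2} - \Ind{v > 1/2}$. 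Evaluating along the truthful path and integrating yields
\begin{align*}
U(v_b) &= U(0) + \int_0^{v_b} \big( I(z) + \Ind{z < 1/2} - \Ind{z > 1/2}\big)\, dz \\
&= U(0) + \big(v_b \wedge (1 - v_b)\big) + \int_0^{v_b} I(z)\, dz,
\end{align*}
since $\int_0^{v_b}(\Ind{z<1/2} - \Ind{z>1/2})\,dz = \min(v_b, 1-v_b)$. To pin down the constant, note that evaluating (\ref{eq:gain-communication-rule-expanded}) at $v_b = 0$ gives $\delta(I(0), 0) = -I(0)\Ind{I(0) \ge 0}$, and monotonicity of $I$ together with $\int_{\mathcal{V}_b} I(z)\,dz = 0$ forces $I(0) \le 0$, whence $\delta(I(0), 0) = 0$; the individual rationality constraint (\ref{eq:participation}) and $t \ge 0$ then force $t(0) = 0$, so $U(0) = 0$.

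It then remains to substitute back. Using $1 - (v_b \vee (1-v_b)) = v_b \wedge (1-v_b)$, rewrite (\ref{eq:gain-communication-rule-expanded}) as $\delta(I(v_b), v_b) = \big(v_b \wedge (1-v_b)\big) + I(v_b)\big(v_b - \Ind{I(v_b) \ge 0}\big)$. Then $t(v_b) = \delta(I(v_b), v_b) - U(v_b)$, and the $\big(v_b \wedge (1-v_b)\big)$ terms cancel, leaving exactly (\ref{eq:transfer-function}).

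The main obstacle is not the algebra but making the envelope step airtight: one has to verify the regularity conditions of \citet{milgrom2002envelope} (an integrable envelope bound on $\partial_v \delta$, a measurable selection of maximizers, and the consequent absolute continuity of $U$), and be careful that $\delta(I(b), v_b)$ is genuinely nondifferentiable in $b$ wherever $I$ changes sign, and in $v_b$ at the kink $v_b = 1/2$ --- so one must use the integral form of the theorem, which only needs $\partial_v \delta$ to be integrable, rather than the derivative form. A secondary point to state cleanly is the normalization $U(0) = 0$: it can be obtained directly from the constraints as above, or equivalently as the revenue-maximizing choice in which the individual rationality constraint of the lowest type binds.
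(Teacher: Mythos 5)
Your proof is correct and follows essentially the same route as the paper: both invoke the Milgrom--Segal envelope theorem with the truthful selection $b=v_b$, both pin down the integration constant by showing $\Delta(0)=0$, and both substitute back into the closed-form gain from Corollary~\ref{corr:value-communication-rule-cases} to isolate $t(v_b)$. The only cosmetic difference is that the paper integrates from $0$ on $[0,1/2]$ and from $1$ on $[1/2,1]$, then uses $\int_0^1 I=0$ to rewrite $\int_{v_b}^1 I$ as $-\int_0^{v_b} I$, whereas you integrate once from $0$ and absorb the kink at $1/2$ into the $v_b\wedge(1-v_b)$ term, which then cancels; your justification that $I(0)\le 0$ (hence $\delta(I(0),0)=0$) via monotonicity plus the zero-integral condition is a slightly more explicit version of the paper's remark that information has no value for the extreme types.
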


Following \citet{myerson1981optimal}, we can use integration by parts to eliminate the integral in (\ref{eq:transfer-function}) when computing $\mathbb{E}_{\mathcal{V}_b}[t(v_b)]$. Then, using the expression for the expected cost described in (\ref{eq:expected-externality}), we re-write the problem in (\ref{eq:optimal-mechanism-general}) as follows.

\begin{corollary} \label{corr:optimal-mechanism-continuous}
    From Proposition~\ref{prop:necessary-and-sufficient-conditions}, the mechanism design problem in (\ref{eq:optimal-mechanism-general}) reduces to:
    \begin{subequations}
        \begin{align} 
            \max_{I} \quad &  \mathbb{E}_{\mathcal{V}_b} \left[ J(I, v_b) \right] \\
            \textrm{s.t.} \quad 
            & \int_{\mathcal{I}} I(v_b) dv_b = 0 \label{eq:integral-constraint}\\
            & \frac{d}{d v_b} I (v_b ) \geq 0 \label{eq:monotonicity-constraint}
        \end{align}
        \label{eq:optimal-mechanism-continuous}
    \end{subequations} 
    where (\ref{eq:monotonicity-constraint}) encodes the monotonicity constraint and 
    \begin{equation}
        \begin{aligned}
             J(I, v_b)  = I(v_b) & \big( v_b  + \frac{F(v_s)}{p(v_s)} + \tau v_s (1 - 2v_s) \\
                &+ \Ind{I(v_b) \geq 0} \left(\tau (1 - 2v_s)^2  - 1\right)\big). \label{eq:integrand}
        \end{aligned}
    \end{equation}
\end{corollary}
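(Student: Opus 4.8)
The plan is to derive (\ref{eq:optimal-mechanism-continuous}) from Proposition~\ref{prop:optimal-mechanism-general} by (i) using Corollary~\ref{corr:transfer-function} to substitute out the transfer, (ii) performing a Myerson-style integration by parts that turns the information-rent integral into an inverse-hazard-rate term, and (iii) invoking Proposition~\ref{prop:necessary-and-sufficient-conditions} to replace the individual rationality and incentive constraints (\ref{eq:participation})--(\ref{eq:incentive}) with the monotonicity constraint (\ref{eq:monotonicity-constraint}) and the zero-mean constraint (\ref{eq:integral-constraint}). First I would substitute $t(v_b)$ from (\ref{eq:transfer-function}) into the profit objective (\ref{eq:expected-profit}), so that $\mathbb{E}_{\mathcal{V}_b}[t(v_b) - c(I(v_b);\tau)]$ becomes the expectation over $v_b$ of $I(v_b)\big(v_b - \Ind{I(v_b) \geq 0}\big) - \int_0^{v_b} I(z)\,dz - c(I(v_b);\tau)$.

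The crucial step is the treatment of the residual term $\mathbb{E}_{\mathcal{V}_b}\big[\int_0^{v_b} I(z)\,dz\big] = \int_0^1 p(v_b)\big(\int_0^{v_b} I(z)\,dz\big)\,dv_b$. Integrating by parts with $F$ as the antiderivative of the density $p$, the boundary contribution $\big[F(v_b)\int_0^{v_b} I(z)\,dz\big]_0^1$ vanishes: the inner integral is $0$ at $v_b = 0$ trivially, and at $v_b = 1$ it equals $\int_{\mathcal{V}_b} I(z)\,dz = 0$ by the zero-mean constraint supplied by Proposition~\ref{prop:necessary-and-sufficient-conditions}. Hence $\mathbb{E}_{\mathcal{V}_b}\big[\int_0^{v_b} I(z)\,dz\big] = -\int_0^1 F(v_b) I(v_b)\,dv_b = -\mathbb{E}_{\mathcal{V}_b}\big[I(v_b)\,F(v_b)/p(v_b)\big]$, so that $\mathbb{E}_{\mathcal{V}_b}[t(v_b)] = \mathbb{E}_{\mathcal{V}_b}\big[I(v_b)\big(v_b + F(v_b)/p(v_b) - \Ind{I(v_b)\geq 0}\big)\big]$, the familiar virtual-value (information-rent) expression that supplies the $v_b + F(v_b)/p(v_b)$ part of the coefficient of $I(v_b)$ in (\ref{eq:integrand}).

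It then remains to substitute the closed form of the expected externality from (\ref{eq:expected-externality}) for $c(I(v_b);\tau)$, discard the two summands $\tau v_s$ and $\tau(1-2v_s)(1-v_s)$ that do not depend on $I$ (these shift the optimal value but not the maximizer), and gather the coefficient multiplying $I(v_b)$. The only delicate bookkeeping is merging the two occurrences of $\Ind{I(v_b)\geq 0}$ — the $-\Ind{I(v_b)\geq 0}$ coming from the transfer and the one coming from the $-\tau(1-2v_s)I(v_b)\big(v_s + \Ind{I(v_b)\geq 0}\big)$ piece of the cost — into the single $\Ind{I(v_b)\geq 0}$-coefficient displayed in (\ref{eq:integrand}), while the $\tau v_s(1-2v_s)$ term carries through directly; this reproduces $J(I,v_b)$ exactly, and the constraint set is then verbatim from Proposition~\ref{prop:necessary-and-sufficient-conditions}. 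I expect the main obstacle to be purely mechanical — the sign and indicator accounting in this last collection step — whereas the one conceptually load-bearing observation is that the zero-mean constraint is precisely what annihilates the boundary term in the integration by parts, so it simultaneously plays the role of a feasibility condition and of the device that renders the objective a tractable pointwise functional of $I$.
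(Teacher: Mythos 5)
Your strategy is essentially the paper's: substitute the transfer from Corollary~\ref{corr:transfer-function}, integrate by parts, substitute the cost, and invoke Proposition~\ref{prop:necessary-and-sufficient-conditions} for the constraint set. The only structural difference is that you kill the boundary term $\bigl[F(v_b)\int_0^{v_b} I(z)\,dz\bigr]_0^1$ directly by the zero-mean condition, whereas the paper carries it as $\int_{\mathcal{V}_b} I(v_b)(1-F(v_b))\,dv_b$, splits $\tfrac{1-F}{p} = \tfrac{1}{p}-\tfrac{F}{p}$, and only then cancels the $\int I\,dv_b$ piece. Both are correct; yours is marginally cleaner. You also implicitly fix the statement's typo $F(v_s)/p(v_s)$, which should read $F(v_b)/p(v_b)$.

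There is, however, a genuine gap in the final collection step, precisely where you flag the ``delicate bookkeeping.'' You claim that merging $-\Ind{I\geq 0}$ from the transfer with the $\Ind{I\geq 0}$ term in $-\tau(1-2v_s)I\bigl(v_s + \Ind{I\geq 0}\bigr)$ from (\ref{eq:expected-externality}) ``reproduces $J(I,v_b)$ exactly.'' It does not: taken literally, that substitution yields an indicator coefficient of $\tau(1-2v_s)-1$, whereas (\ref{eq:integrand}) states $\tau(1-2v_s)^2 - 1$. The culprit is that (\ref{eq:expected-externality}) as displayed is inconsistent with its own derivation: writing $c = \tau v_s + \tau(1-2v_s)P(m_1;v_s)$ with $P(m_1;v_s) = 1 - v_s - v_s I - (1-2v_s)(I\vee 0)$ gives $c = \tau v_s + \tau(1-2v_s)(1-v_s) - \tau(1-2v_s)I\bigl(v_s + (1-2v_s)\Ind{I\geq 0}\bigr)$, i.e.\ there is a missing factor of $(1-2v_s)$ inside the final parenthetical. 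You would only discover this by actually carrying out the accounting you defer, and then re-deriving the cost from first principles rather than trusting the printed (\ref{eq:expected-externality}); as written, your argument asserts a cancellation that does not close.
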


This formulation bears similarities to \citet{myerson1981optimal}'s single item auction, where the problem is reduced to virtual surplus maximization subject to monotonic allocations.
The virtual value function is defined as the partial derivative of the integrand in (\ref{eq:integrand}) with respect to $I$, given by
\begin{align*}
    \pi (I, v_b) &= \frac{\partial}{\partial I} J(I, v_b) \\
    &=\pi^+ (v_b) + \Ind{I(v_b) \leq 0} \left( \pi^- (v_b) - \pi^+ (v_b) \right),
\end{align*}
where
\begin{equation*}
    \pi^- (v_b) =  p(v_b) \left(\tau v_s (1 - 2 v_s) + v_b \right) + F(v_b),
\end{equation*}
and
\begin{equation*}
    \pi^+ (v_b) = p(v_b) \left(\tau (1 - v_s) (1 - 2 v_s) + v_b - 1\right) + F(v_b),
\end{equation*}
where the negative $\pi^-$ and positive $\pi^+$ parts represent the marginal benefit to the seller in terms of profit, albeit ignoring the constraint (\ref{eq:integral-constraint}), of increasing the allocation of each type from $-1$ to $0$ and from $0$ to $1$, respectively. 
Hereafter, we assume the virtual value function is non-decreasing in $v_b$, which places a regularity condition on the type distribution. 
This is a common assumption as it holds for several familiar distributions, including the uniform, exponential, and Gaussian distributions. 
\begin{figure*}[!t]
    \centering
    \begin{subfigure}[]{0.32\textwidth}
        \begin{center}
        \caption{$v_s=1/10$}
            \includegraphics[width=\columnwidth]{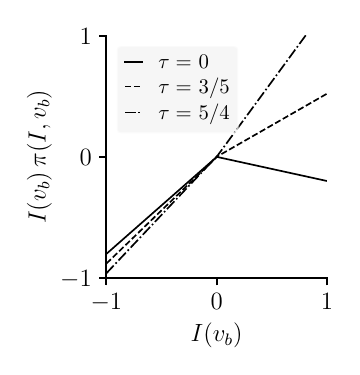}
      \end{center}
    \end{subfigure}
    \hspace{1mm}
    \begin{subfigure}[]{0.32\textwidth}
        \begin{center}
        \caption{$v_s=1/2$}
            \includegraphics[width=\columnwidth]{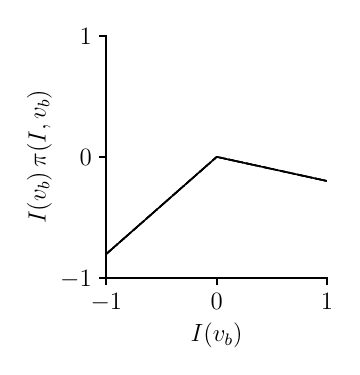}
      \end{center}
    \end{subfigure}
    \hspace{1mm}
    \begin{subfigure}[]{0.32\textwidth}
        \begin{center}
         \caption{$v_s=9/10$}
         \includegraphics[width=\columnwidth]{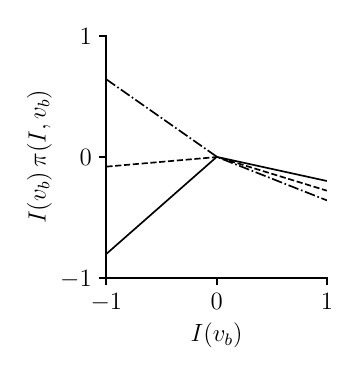}
      \end{center}
     
    \end{subfigure}
    \caption{Virtual surplus with $v_b = 2/5$ for varying levels of competition.}
  \label{fig:uniform-objective}
\end{figure*}
\paragraph{Irregularity.}
This regularity assumption excludes cases where most buyers are well-informed \textit{ex-ante} (i.e., with types clustered near 0 or 1), resulting in a bimodal density that violates regularity, inducing virtual values that require \textit{ironing}. Ironing, introduced by \citet{myerson1981optimal}, treats irregular distributions by smoothing non-monotonic intervals. Myerson's method transforms the virtual values $\pi$ into a non-decreasing \textit{ironed} version $\bar{\pi}$, ensuring that allocations are monotonic by design, as is the case for regular distributions. However, unlike in the classic model of monopolistic screening, the virtual values here are a function of $I$, rather than being constant in this decision variable, and so this method cannot be readily applied. One could adopt a similar method to \citet{toikka2011ironing}, which extends that of Myerson such that in our case
\begin{align*}
    \bar{\pi} (I, v_b) &= \frac{d}{d v_b} \operatorname{conv} \left( \int_0^{v_b} \pi (z, I) d z \right) (v_b),
\end{align*}
where 
\begin{align*}
    \int_{0}^{v_b} \pi (z, &I) d z = F(v_b) \Big(\tau v_s (1 - 2 v_s) \\
    &\quad - \Ind{I(v_b) \leq 0}  \left(2 \tau v_s (1 - 2 v_s) + 1\right) + v_b \Big),
\end{align*}
however this requires the virtual surplus to be weakly concave in $I$, which we shall show later does not always hold in our case.\footnote{The operator $\operatorname{conv}(\cdot)$ denotes the convex envelope. The convex envelope $\operatorname{conv}(g)$ of a function $g : \mathbb{R}^n \mapsto \mathbb{R}$ is the largest convex underestimator of $g$ whose epigraph is the convex hull of the epigraph of $g$, that is, $\operatorname{epi} (\operatorname{conv}(g)) = \operatorname{conv} (\operatorname{epi}(g))$.} Therefore, we leave a thorough extension of our model to irregular type distributions to future work. 

\paragraph{Solution.}
With the regularity condition in place, we deal with the equality constraint (\ref{eq:integral-constraint}) using standard Lagrange methods, such that the Lagrangian function is given by 
\begin{align*}
    L(I, \lambda) = \int_{\mathcal{V}_b} \left (J(I, z) - \lambda z \right) dz.
\end{align*}
As $I(v_b)$ is non-decreasing in $v_b$, $J(I, v_b)$ is piecewise linear in $I$, and the integrals in (\ref{eq:optimal-mechanism-continuous}) are linear functionals, strong duality holds thus $\lambda^\ast = \argmin_\lambda \max_{I} L(I, \lambda)$ and hence we find $I^\ast$ by maximizing $L(I, \lambda^\ast)$. By applying the results of \citet{myerson1981optimal}, we can instead maximize virtual surplus pointwise, so the menu is optimal if there exists $\lambda^\ast$ with
\begin{align}
    I^\ast (v_b) = \argmax_{I \in \mathcal{I}} \ \ I(v_b)  \left ( \pi(I, v_b) - \lambda^\ast \right). \label{eq:max-virtual-surplus}
\end{align}
\begin{proposition} \label{prop:optimal-menu}
    For regular type distributions, the solution to the mechanism design problem in (\ref{eq:optimal-mechanism-continuous}) is such that the optimal menu comprises no partially informative communication rules, with $I^\ast(v_b) \in \{-1, 0, 1\}$.
\end{proposition}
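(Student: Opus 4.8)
The plan is to read the answer off the pointwise reduction already established in (\ref{eq:max-virtual-surplus}). Fix a multiplier $\lambda$. For each type $v_b$ the scalar objective $I \mapsto I\left(\pi(I,v_b) - \lambda\right)$ is piecewise linear on $[-1,1]$ with a single kink at $I = 0$, because $\pi(I,v_b)$ equals the constant $\pi^+(v_b)$ on $I > 0$ and the constant $\pi^-(v_b)$ on $I < 0$; its slope is $\pi^-(v_b) - \lambda$ on $[-1,0]$ and $\pi^+(v_b) - \lambda$ on $[0,1]$. A scalar piecewise-linear function on a compact interval attains its maximum at an endpoint or at a breakpoint, so the $\argmax$ in (\ref{eq:max-virtual-surplus}) can always be taken in $\{-1,0,1\}$; call this pointwise maximizer $I^\dagger(v_b,\lambda)$. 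Comparing the three candidate values, $I^\dagger = +1$ when $\pi^+(v_b) \geq \lambda$ and $\pi^+(v_b) + \pi^-(v_b) \geq 2\lambda$; $I^\dagger = -1$ when $\pi^-(v_b) \leq \lambda$ and $\pi^+(v_b) + \pi^-(v_b) \leq 2\lambda$; and $I^\dagger = 0$ otherwise, which requires $\pi^+(v_b) \leq \pi^-(v_b)$.

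It remains to exhibit a multiplier $\lambda^\ast$ for which $I^\dagger(\cdot,\lambda^\ast)$ is feasible for (\ref{eq:optimal-mechanism-continuous}), i.e.\ non-decreasing and integrating to $0$; then the sufficient condition stated before (\ref{eq:max-virtual-surplus}) certifies optimality. Monotonicity comes for free from regularity: $\pi^-$, $\pi^+$, and hence $\pi^+ + \pi^-$, are non-decreasing in $v_b$, so $\{v_b : I^\dagger(v_b,\lambda) = -1\}$ is a lower interval, $\{v_b : I^\dagger(v_b,\lambda) = 1\}$ an upper interval, and any region on which $I^\dagger = 0$ lies between them, making (\ref{eq:monotonicity-constraint}) slack exactly as in \citet{myerson1981optimal}'s regular case. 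For the equality constraint, set $\Lambda(\lambda) = \int_{\mathcal{V}_b} I^\dagger(v_b,\lambda)\, dv_b$. Each threshold separating the $-1$, $0$, and $1$ regions is a level set of a non-decreasing function of $v_b$ and hence moves monotonically with $\lambda$, so $\Lambda$ is non-increasing, with $\Lambda(\lambda) = 1$ once $\lambda$ is small enough that every type receives $I = 1$ and $\Lambda(\lambda) = -1$ once $\lambda$ is large. By the intermediate value theorem there is $\lambda^\ast$ with $\Lambda(\lambda^\ast) = 0$; if $\Lambda$ jumps over $0$ at some $\lambda^\ast$ — possible only if $\pi^\pm$ are flat on an interval, so a positive-mass set of types is exactly indifferent between two adjacent options — break the indifference on a sub-interval, still using only the values $-1$ and $1$ (or $-1$ and $0$, or $0$ and $1$) and preserving monotonicity, to land exactly on $\int I^\ast\, dv_b = 0$. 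The resulting $I^\ast$ is feasible, solves (\ref{eq:max-virtual-surplus}) at $\lambda^\ast$, and takes values only in $\{-1,0,1\}$, which is the claim.

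The main obstacle is this last step, together with the legitimacy of the dualization. Unlike the textbook single-buyer Myerson problem, the allocation must also satisfy the market-clearing-type equality $\int_{\mathcal{V}_b} I(v_b)\, dv_b = 0$, and one might a priori fear that balancing it forces a partially informative rule on a positive-measure set. The resolution rests entirely on $J(\cdot,v_b)$ being piecewise linear, so that ``ties'' at the optimum live at the kink $I = 0$ or between the extreme points $I = \pm 1$, never strictly inside a linear piece; this is precisely why the equality constraint can be met by a tie-break over vertices alone. It is also where regularity is essential: without it, $J$ need not be concave in $I$ on $[-1,1]$ and the ironing procedure discussed after Corollary~\ref{corr:optimal-mechanism-continuous} would be required, which is why that case is left open.
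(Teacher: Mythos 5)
Your proof takes the same Lagrangian-duality route as the paper, but is considerably more careful. The paper's proof in Appendix~D simply writes down a five-case solution to (\ref{eq:max-virtual-surplus}) "by inspection" and then asserts that under regularity the virtual values are strictly increasing, so the coincidence set $\{\pi^{\pm}(v_b) = \lambda^\ast\}$ has measure zero; it does not verify that a clearing multiplier $\lambda^\ast$ exists, nor does it explicitly handle the convex ($\tau \geq \tau'$) regime, where its listed cases for $I^\ast = -1$ and $I^\ast = 1$ overlap. Your version fixes both: the observation that $I \mapsto I(\pi(I,v_b) - \lambda)$ is piecewise linear on $[-1,1]$ with a single kink at $0$ immediately forces the pointwise argmax to a vertex regardless of whether $J$ is concave or convex in $I$, your condition $\pi^+ + \pi^- \gtrless 2\lambda$ correctly adjudicates the $\pm 1$ comparison in the convex case, and the intermediate-value argument (plus the tie-break on indifference intervals) actually establishes that a feasible $\lambda^\ast$ exists. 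This also patches a small imprecision in the paper, which writes that strict monotonicity "holds for all regular type distributions" when regularity only gives weak monotonicity; your tie-break handles the flat-interval case honestly.

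One conceptual slip in your closing paragraph: you write that regularity "is also where ... $J$ need not be concave in $I$". These are orthogonal. Regularity is a property of the type distribution — it requires $\pi^{\pm}$ to be non-decreasing in $v_b$ and is what guarantees the pointwise argmax is a monotone function of $v_b$, dispensing with ironing. Concavity of $J$ in $I$ is governed by the sign of $\pi^-(v_b) - \pi^+(v_b) = p(v_b)\big(1 - \tau(1-2v_s)^2\big)$, i.e.\ by whether $\tau < \tau' = (1-2v_s)^{-2}$, and has nothing to do with the shape of $p$ or $F$. The paper discusses both (ironing for irregular distributions; the Toikka-type ironing requiring concavity in $I$), and you appear to have fused them. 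This does not damage the body of the proof, since your vertex argument does not rely on concavity, but the diagnosis of "why regularity is essential" should point only to monotonicity of the thresholds, not to concavity of $J$.
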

\begin{proof}
    Provided in Appendix~\ref{app:optimal-menu}.
\end{proof}

\begin{corollary} \label{corr:analytic-solution}
    Let $\tau^\prime = (1 - 2 v_s)^{-2}$, then provided $\tau < \tau^\prime$, from Proposition~\ref{prop:optimal-menu}, it holds that
    \begin{align*}
        \lambda^\ast = \frac{\pi^-(1/2) + \pi^+(1/2)}{2},
    \end{align*}
    and the primal solution to (\ref{eq:max-virtual-surplus}) is given by
    \begin{align*}
        I^\ast (v_b) = 
        \begin{cases}
            -1 & \text{if} \ \  \pi^-(v_b) < \lambda^\ast, \\
            0 & \text{if} \ \ \pi^-(v_b) > \lambda^\ast > \pi^+(v_b), \\
            1 & \text{if} \ \ \pi^+(v_b) > \lambda^\ast,
        \end{cases}
    \end{align*}
    yet, if $\tau \geq \tau^\prime$, then $\lambda^\ast  = \pi^+ (1/2)$ and the primal solution reduces to
    \begin{align*}
        I^\ast (v_b) = \begin{cases}
            -1 & \text{if} \ v_b \leq 1/2, \\
            1 & \ \text{otherwise},
        \end{cases}
    \end{align*}
    meaning it is optimal for the seller to reveal no information at all.
\end{corollary}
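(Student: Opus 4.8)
The plan is to invoke Proposition~\ref{prop:optimal-menu} to turn the pointwise program \eqref{eq:max-virtual-surplus} into a finite comparison at each type, pin down the multiplier $\lambda^\ast$ from the single equality constraint \eqref{eq:integral-constraint}, and then read off the allocation. By Proposition~\ref{prop:optimal-menu} it suffices to consider $I(v_b)\in\{-1,0,1\}$; the feasibility restrictions obtained earlier ($I(v_b)\le v_b/(1-v_b)$ for $v_b\le 1/2$ and $I(v_b)\ge-(1-v_b)/v_b$ for $v_b>1/2$) moreover confine the choice to $\{-1,0\}$ when $v_b<1/2$ and to $\{0,1\}$ when $v_b>1/2$, and recall that $I=\pm1$ are exactly the uninformative rules. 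Since $\pi(I,v_b)=\pi^-(v_b)$ for $I\le0$ and $\pi(I,v_b)=\pi^+(v_b)$ for $I>0$, the objective $I(v_b)(\pi(I,v_b)-\lambda^\ast)$ takes the values $\lambda^\ast-\pi^-(v_b)$, $0$ and $\pi^+(v_b)-\lambda^\ast$ at $I=-1,0,1$ respectively. Hence for any multiplier $\lambda^\ast$ the maximiser obeys $I^\ast(v_b)=-1$ iff $\pi^-(v_b)<\lambda^\ast$, $I^\ast(v_b)=1$ iff $\pi^+(v_b)>\lambda^\ast$, and $I^\ast(v_b)=0$ otherwise; the regularity assumption makes $\pi^-,\pi^+$ non-decreasing, so these level sets are intervals and $I^\ast$ is monotone by construction. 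The remaining structure is controlled by the sign of $\pi^-(v_b)-\pi^+(v_b)=p(v_b)\bigl(1-\tau(1-2v_s)^2\bigr)$, which is positive exactly when $\tau<\tau^\prime$.

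When $\tau\ge\tau^\prime$ we have $\pi^-(v_b)\le\pi^+(v_b)$ for every $v_b$, so the ``$I^\ast=0$'' set is empty and the allocation is bang-bang. Taking $\lambda^\ast=\pi^+(1/2)$, monotonicity gives $\pi^-(v_b)<\pi^-(1/2)\le\pi^+(1/2)=\lambda^\ast$ for $v_b<1/2$ and $\pi^+(v_b)>\pi^+(1/2)=\lambda^\ast$ for $v_b>1/2$, hence $I^\ast(v_b)=-1$ on $[0,1/2)$ and $I^\ast(v_b)=1$ on $(1/2,1]$. This candidate is non-decreasing and integrates to $-\tfrac12+\tfrac12=0$, so it is feasible for \eqref{eq:optimal-mechanism-continuous}, and since $I^\ast(v_b)\in\{-1,1\}$ are the uninformative rules, no information is sold. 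Exactly the $\lambda$ in $[\pi^-(1/2),\pi^+(1/2)]$ reproduce this primal while satisfying complementary slackness, so $\lambda^\ast=\pi^+(1/2)$ is a valid optimal dual (and the natural one, being the limit of the $\tau<\tau^\prime$ formula as $\tau\downarrow\tau^\prime$, where $\pi^-(1/2)=\pi^+(1/2)$).

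When $\tau<\tau^\prime$ we have $\pi^-(v_b)>\pi^+(v_b)$, so the three regions genuinely appear: $I^\ast=-1$ on $[0,a)$ with $\pi^-(a)=\lambda^\ast$, $I^\ast=0$ on $(a,b)$, and $I^\ast=1$ on $(b,1]$ with $\pi^+(b)=\lambda^\ast$. Imposing \eqref{eq:integral-constraint} gives $-a+(1-b)=0$, i.e.\ $a=1-b$ and hence $\pi^-(a)=\pi^+(1-a)$. Substituting the explicit forms of $\pi^-$ and $\pi^+$ into this equation and simplifying shows that $a$ and $b$ sit symmetrically about $1/2$ and that $\lambda^\ast=\pi^-(a)=\tfrac12\bigl(\pi^-(1/2)+\pi^+(1/2)\bigr)$, which lies between $\pi^+(1/2)$ and $\pi^-(1/2)$ as required for the threshold rule to be well posed. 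Monotonicity of $I^\ast$ again follows from regularity, so the constraints of \eqref{eq:optimal-mechanism-continuous} hold; by the strong duality established above, together with pointwise maximisation of the piecewise-linear Lagrangian, the pair $(I^\ast,\lambda^\ast)$ is optimal.

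The step I expect to be the main obstacle is the last substitution: verifying that $\lambda^\ast=\tfrac12(\pi^-(1/2)+\pi^+(1/2))$ indeed solves $\pi^-(a)=\pi^+(1-a)$ with the resulting $a\in[0,1/2)$ is the one place where the precise algebra of $\pi^-$ and $\pi^+$ (together with the monotonicity supplied by regularity, needed to guarantee the level sets are intervals) really enters; the rest is routine bookkeeping around the pointwise maximisation. A secondary subtlety is the non-uniqueness of the optimal multiplier in the $\tau\ge\tau^\prime$ regime, which requires justifying the particular choice $\lambda^\ast=\pi^+(1/2)$ over the other points of $[\pi^-(1/2),\pi^+(1/2)]$.
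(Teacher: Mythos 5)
The paper gives no formal proof of this corollary; its reasoning lives in the discussion around Figure~\ref{fig:uniform-solution}, and your plan follows that discussion closely: pointwise maximisation of the Lagrangian over $I\in\{-1,0,1\}$ and using the integral constraint to pin down $\lambda^\ast$. At the level of strategy you are on the paper's track. Two steps, however, do not go through as asserted.

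First, the step you flag as the main obstacle really is one. From $a=1-b$ and $\pi^-(a)=\pi^+(1-a)=\lambda^\ast$ you can conclude $\lambda^\ast=\tfrac12\bigl(\pi^-(1/2)+\pi^+(1/2)\bigr)$ only if $\pi^-(a)+\pi^+(1-a)$ is independent of $a$, which happens precisely when $\pi^-$ and $\pi^+$ are affine with a common slope — i.e.\ for the uniform type distribution the paper works through explicitly. For a general regular $F$ it fails: already for a symmetric but non-uniform density one computes $\pi^-(a)+\pi^+(1-a)=1+\tau(1-2v_s)p(a)$, which varies with $a$. Asserting that ``substituting and simplifying shows'' the formula leaves exactly this restriction unaddressed; either the corollary must be read as implicitly specialised to the uniform case, or a different argument is needed.

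Second, the $\tau\ge\tau^\prime$ branch has a tie-breaking error. You justify $I^\ast(v_b)=-1$ on $[0,1/2)$ via $\pi^-(v_b)<\lambda^\ast$ and $I^\ast(v_b)=1$ on $(1/2,1]$ via $\pi^+(v_b)>\lambda^\ast$, using the case split of Proposition~\ref{prop:optimal-menu}. But once $\pi^-<\pi^+$ those two cases overlap: for $v_b$ just above $1/2$, both conditions hold simultaneously, and the case list is no longer a well-defined selection. The tie must be broken by comparing the actual Lagrangian values $\lambda^\ast-\pi^-(v_b)$ and $\pi^+(v_b)-\lambda^\ast$, which places the switch where $\pi^-(v_b)+\pi^+(v_b)=2\lambda^\ast$. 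With $\lambda^\ast=\pi^+(1/2)$ and $\pi^-(1/2)<\pi^+(1/2)$, this switch sits strictly to the right of $1/2$ (in the uniform case at $v_b=\tfrac{1+\tau(1-2v_s)^2}{4}>1/2$), so the pointwise maximiser is still $I^\ast=-1$ on a right-neighbourhood of $1/2$, contradicting the proposed allocation. The multiplier that actually places the switch at $1/2$ is once more $\tfrac12\bigl(\pi^-(1/2)+\pi^+(1/2)\bigr)$; your parenthetical remark that $\pi^+(1/2)$ is the limit of this formula as $\tau\downarrow\tau^\prime$ is the accurate observation, and the stronger claim that every $\lambda\in[\pi^-(1/2),\pi^+(1/2)]$ supports the primal does not survive the tie-breaking check.
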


To gain some insight into these results, suppose the buyer's types follow a uniform distribution. In this case, $F(v_b) = v_b$ and $p(v_b) = 1$ for every $v_b \in \mathbb{R}_{[0, 1]}$, and the virtual values are given by
\begin{align*}
    \pi^-(v_b) = 2v_b + \tau v_s (1 - 2 v_s),
\end{align*}
and
\begin{align*}
    \pi^+(v_b) = 2v_b + \tau (1 - v_s) (1 - 2 v_s) - 1.
\end{align*}
\paragraph{Insights.}
In Figure~\ref{fig:uniform-objective}, we plot the objective value without the integral constraint (i.e., the virtual surplus) for varying levels of competition, with $v_b = 2/5$. Irrespective of $\tau$, the virtual surplus is piecewise linear in $I$. When $v_s = 1/2$, it is completely unaffected by $\tau$. This is because the seller places equal probability on either state, so the expected externality is equal whether or not the buyer changes actions. Note that, this would not be the case if $\tau$ was a function of the state, as the expected cost could again be asymmetric. 

If $v_s \leq 1/2$, both $\sigma(v_b) = 1$ and $\sigma(v_s) = 1$, so the buyer takes the same action as the seller, without additional information. In this case, when $\tau$ is low, the buyer can charge enough for it to be profitable to reveal the state. 
As competition intensifies, the seller wants to make the buyer choose the \textit{wrong} action by obfuscating $X=1$ and revealing only $X=0$, so the offering changes to $I^\ast(v_b) = 1$. However, this would violate individual rationality so would not be possible with the integral constraint in place. 

If $v_s \geq 1/2$, we have $\sigma(v_s) = 0$, so the buyer would choose the opposite action to the seller. In this case, as $\tau$ increases, the benefit of revealing the state becomes outweighed by the cost incurred when the buyer switches actions to the state they deem more likely, and eventually no information is sold. More generally, the threshold $\tau^\prime$ is such that for every $\tau < \tau^\prime$, the objective is concave, and the primal solution occurs at the kinks, $I^\ast(v_b) \in \{-1, 0, 1\}$, yet for $\tau \geq \tau^\prime$, the objective is convex and the primal solution instead occurs at the endpoints, $I^\ast(v_b) \in \{-1, 1\}$. It is at this point when no transfer can be charged that justifies the externality cost of revealing the state.

\begin{figure*}[!t]
    \centering
    \begin{subfigure}[]{0.32\textwidth}
        \begin{center}
        \caption{$\tau=0$}
            \includegraphics[width=\columnwidth]{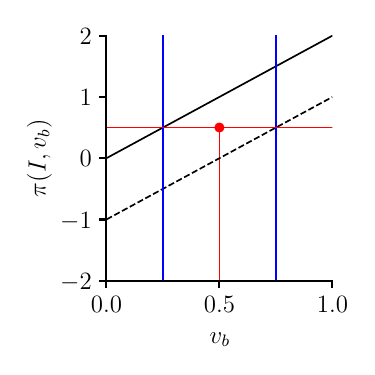}
      \end{center}
    \end{subfigure}
    \hspace{1mm}
    \begin{subfigure}[]{0.32\textwidth}
        \begin{center}
        \caption{$\tau=\tau^{\prime}/2$}
            \includegraphics[width=\columnwidth]{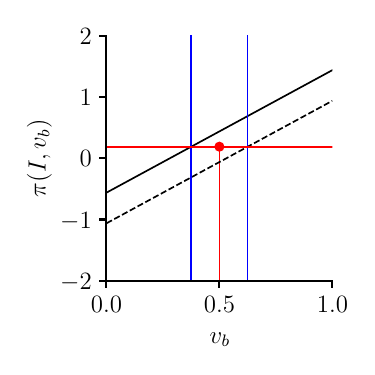}
      \end{center}
    \end{subfigure}
    \hspace{1mm}
    \begin{subfigure}[]{0.32\textwidth}
        \begin{center}
         \caption{$\tau=\tau^{\prime}$}
         \includegraphics[width=\columnwidth]{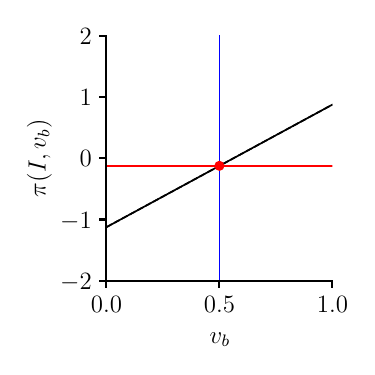}
      \end{center}
     
    \end{subfigure}
    \caption{Both negative (solid) and positive (dashed) virtual values, with $v_s = 4/5$, for varying levels of competition. The primal solution, $I^\ast (v_b)$, is indicated by the blue vertical lines, with only types within the enclosed interval offered the fully informative communication rule. The dual solution, $\lambda^\ast$, which identifies the threshold points, is highlighted in red.}
  \label{fig:uniform-solution}
\end{figure*}

\begin{figure}[!t]
  \begin{center}
    \includegraphics[width=\columnwidth]{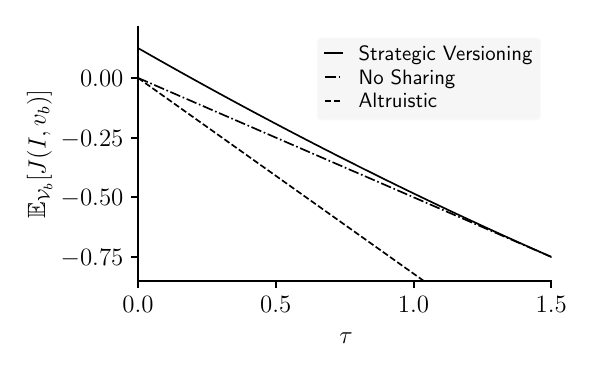}
  \end{center}

    \caption{Expected profit from either strategic versioning (solid), offering full information to all types (dashed) or sharing no information at all (dashdot)}
  \label{fig:costs}
\end{figure}

In Figure~\ref{fig:uniform-solution}, we plot the resulting primal solutions and the virtual values after re-introducing the integral constraint.
To satisfy constraint (\ref{eq:integral-constraint}), the dual variable $\lambda^\ast$ must assign two threshold types that separate those receiving full information, $I(v_b) = 0$, from those receiving no information at all, $I(v_b) \in \{-1, 1\}$. Partial information, $I(v_b) \in (-1, 1)$, is only offered to types where $\pi(I, v_b) = \lambda^\ast$. 
However, if the virtual value function is strictly increasing, which holds for all regular type distributions, then these thresholds are crossed exactly once, in other words, the virtual values have measure zero at these points. Thus, partial information is never offered and the optimal menu is a step function. For instance when $\tau = 0$, the fully informative communication rule is offered if $1/4 \leq v_b \leq 3/4$ at a price $t(v_b) = 1/4$.
In this case, the results align with the setup of \citet{bergemann2018design}, where the seller is not part of the game. The intuition is that it effectively corresponds to a setting with there is no cost associated with the buyer's actions.
Nevertheless, even in this case, Corollary~\ref{corr:analytic-solution} extends these prior results by providing an analytical solution for $\lambda^\ast$.

Since the objective is concave when $\tau < \tau^\prime$, by inspection we see that $I (v_b) = 1$ is preferable to $I (v_b) = -1$ only if $\pi^{+} (v_b) - \lambda > \lambda - \pi^{-} (v_b)$, so the two threshold types are specified by the unique $\lambda^\ast$ that satisfies this inequality with equality. Moreover, as the integral in constraint (\ref{eq:integral-constraint}) has uniform intervals, the set of types where $I(v_b) = -1$ is forced have the same measure as the set where $I(v_b) = 1$, which on the unit interval is measure $1/2$, so the dual variable is ultimately pinned down by the equality constraint $\pi^{+} (1/2) - \lambda^\ast = \lambda^\ast - \pi^{-} (1/2)$. As $\tau$ increases, the set of types offered the fully informative communication rule becomes narrower, raising the price of information as the seller extracts more revenue from fewer types to maximize revenue whilst minimizing the externality cost. 

When $\tau = \tau^\prime$, the virtual value functions collapse onto one another, and beyond this point the objective is convex. As the seller cannot mislead the buyer to taking the \textit{wrong} action with sufficient probability without violating individual rationality, they can do no better offering no information. 
The primal solution therefore always occurs at the endpoints, so in order to satisfy the constraint (\ref{eq:integral-constraint}), exactly half of the types are offered $I (v_b) = -1$ and the other half $I (v_b) = 1$. Therefore, $\lambda^\ast$ now only prescribes a single threshold type, 
where $I (v_b) = 1$ if $\pi^+(v_b) > \lambda$, else $I (v_b) = -1$, implying a threshold of $v_b = 1/2$ and dual solution $\lambda^\ast = \pi^+(1/2)$.

To round off our analyses, in Figure~\ref{fig:costs} we plot the expected profit obtained via (i) strategic versioning using our mechanism design approach; (ii) offering the fully informative communication rule to all types (i.e., that which would be obtained with altruistic information sharing); and (iii) sharing no information at all. 
One can see that the seller is always better off sharing no information than revealing it all for free, due to the anti-coordination incentives in our setup. By strategically screening buyer types, our market mechanism provides incentives for information sharing since for low levels of competition, the profit earned is strictly positive. For moderate $\tau$, selling information to some buyers is still better than not sharing at all. However, as competition intensifies, less and less information is sold, until eventually, none at all.

\section{Conclusions} \label{sec:conclusions}
This paper examines a setting in which a monopolist sells supplemental information to a privately informed buyer. As with previous work, both the design and price of information are shaped by the buyer’s prior beliefs. We extend this framework to account for the seller’s own private information, the potential for the buyer and seller to be competitors in a downstream market, and the intensity of competition therein. 
Without product versioning, the seller is better off sharing no information than altruistically revealing the true state, and vice versa for the buyer---an outcome consistent with the efficient market hypothesis, as freely disclosing information would erode the seller's information rent.
Hence, we show that with our mechanism design the seller can screen buyer types to maximize profit in a way that benefits both parties. That said, in fiercely competitive environments, the seller may still be better off not sharing as the externality cost outweighs the cost of doing so, since the transfers are ultimately pinned down by the demand for information and obedience constraints restrict the seller's ability to steer the buyer's actions at the expense of social welfare. In this case, regulatory measures or other market interventions may be required to incentivize sharing.

In this work, we characterized the profit-maximizing mechanism within a linear model with binary states and actions, hence much work remains to make this framework useful in practice. A natural next step is to extend our model to richer settings, with more than two firms, as well as multiple, or continuous, states and actions. Further, in many real-world markets, information induces nonlinear externalities that depend on all player's actions, leading to Bayesian Nash equilibria rather than dominant strategies. With multiple firms, one can model the complex network of externalities amongst buyer's and seller's alike, possibly represented by a weighted directed graph.
Analyzing such environments may require computational approaches to characterize the optimal mechanism, leveraging recent advancements in learning-based methods for automated mechanism design. 

On a broader note, whilst our results offer first pass insights into the sale of information to a competitor, our information design-based setup has its limitations. First, the seller is assumed to be risk-neutral, yet in practice sellers would have different risk appetites, so incorporating risk measures (e.g., value at risk, expected shortfall, etc.) in the objective should be explored. Whilst economic literature often assumes perfect knowledge of the buyer's type distribution, this is rarely the case in practice, so future work could explore the impact of distributionally robust mechanism design or methods for learning the distribution on the design and price of information. 

On the topic of learning, we focus only on static mechanisms, however in dynamic settings the seller could extract more surplus by correlating payments with realized states. Also in our setup, buyers are distinguished solely by differences in their beliefs. Yet, in real-world scenarios, they may exhibit heterogeneity along multiple dimensions, including their capacity for processing information, or their preference for timely access to information. Lastly, since the realized state is non-contractible, the seller cannot use scoring rules to price information, which raises the question of how much profit could be earned if this were the case.

\appendix
\section{Proof of Proposition~\ref{prop:direct-communication-rules}}
\label{app:direct-communication-rules}
Suppose that the set of messages the seller can send to the buyer was $\mathcal{M} = \cup_{\mathcal{A}_b} \mathcal{M}^{a_b}$, where $\mathcal{M}^{a_b}$ is the subset of messages that incite type $v_{b}$ to choose action $a_b \in \mathcal{A}_b$. Further, consider an alternative message space $\mathcal{R} = \cup_{\mathcal{A}_b} \{ r^{a_b} \}$, such that $\vert \mathcal{R} \vert = \vert \mathcal{A}_b \vert$, and each message $r^{a_b}$ inciting type $v_b$ to choose a different action. 

In summary, for each action, set $\mathcal{M}$ contains many messages that incite that action, whilst set $\mathcal{R}$ contains only one.
\begin{lemma} \label{lemma:garbling}
    The distributions over $\mathcal{M}$ and $\mathcal{R}$ are such that $p(r^{a_b} \vert x; v_b)$ is a garbling of $p(m^{a_b} \vert x; v_b)$.
\end{lemma}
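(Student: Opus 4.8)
The plan is to exhibit an explicit garbling in the sense of \citet{blackwell1951comparison, blackwell1953equivalent}: a Markov kernel $g(\cdot \mid \cdot)$ from $\mathcal{M}$ to $\mathcal{R}$ that does \emph{not} depend on the realized state $x$ and satisfies $p(r \mid x; v_b) = \sum_{m \in \mathcal{M}} g(r \mid m)\, p(m \mid x; v_b)$ for every $x \in \mathcal{X}$ and $r \in \mathcal{R}$. The natural candidate is the deterministic pooling map that collapses each block of the partition onto its representative, namely $g(r^{a_b} \mid m) = \Ind{m \in \mathcal{M}^{a_b}}$. Intuitively this relabels every message recommending action $a_b$ to the single symbol $r^{a_b}$, discarding all finer distinctions among messages within the same block while leaving the induced action unchanged.

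First I would verify that $g$ is a well-defined, state-independent stochastic kernel. By the dominant strategy in (\ref{eq:optimal-strategy}), each message $m$ with positive probability induces a posterior $\theta_m^{v_b}$ and hence a unique best response $\sigma(\theta_m^{v_b})$, with ties at posterior $1/2$ resolved deterministically by the definition of $\sigma$; consequently the blocks $\{\mathcal{M}^{a_b}\}_{a_b \in \mathcal{A}_b}$ form a partition of the support of $M$ given report $v_b$. So every $m$ lies in exactly one block, whence $g(r^{a_b} \mid m) \geq 0$ and $\sum_{a_b \in \mathcal{A}_b} g(r^{a_b} \mid m) = 1$. Crucially, the assignment of $m$ to its block is determined solely by the reported type $v_b$ and the induced posterior, both common across states, so $g$ carries no dependence on $x$, exactly as the garbling requirement demands.

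Next I would check the composition identity. Taking the direct rule over $\mathcal{R}$ to be defined by aggregating the mass of all messages recommending the same action, $p(r^{a_b} \mid x; v_b) := \sum_{m \in \mathcal{M}^{a_b}} p(m \mid x; v_b)$, one obtains for each $x$ and $a_b$ that $\sum_{m \in \mathcal{M}} g(r^{a_b} \mid m)\, p(m \mid x; v_b) = \sum_{m \in \mathcal{M}^{a_b}} p(m \mid x; v_b) = p(r^{a_b} \mid x; v_b)$, which is precisely the defining equation of a garbling. The only delicate points, and the ones I would emphasize, are (i) that the blocks $\mathcal{M}^{a_b}$ genuinely partition the support, which could fail if the buyer's best response were multivalued and is handled by the deterministic tie-breaking built into $\sigma$, and (ii) that $g$ does not smuggle in any state dependence, which holds because the map from messages to recommended actions is the same function of the posterior in every state. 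Neither is substantial, so the lemma reduces to the bookkeeping above.
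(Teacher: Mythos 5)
Your proof is correct and takes essentially the same route as the paper's: the paper likewise exhibits the deterministic pooling kernel $g(r^{a_b}, m) = 1$ for $m \in \mathcal{M}^{a_b}$ (and $0$ otherwise) and observes the composition identity holds by construction. You are somewhat more careful than the paper in spelling out why $g$ is a well-defined, state-independent Markov kernel (the partition property and tie-breaking in $\sigma$), but these are details the paper treats as immediate rather than a genuinely different argument.
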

\begin{proof}
    If $p(r^{a_b} \vert x; v_b)$ is a garbling of $p(m^{a_b} \vert x; v_b)$, there must exist a function $g : \mathcal{R} \times \mathcal{M}^{a_b} \mapsto \mathbb{R}_{[0, 1]}$ such that
    \begin{align*}
        p(r^{a_b} \vert x; v_b) = \sum_{m^{a_b} \in \mathcal{M}^{a_b}} g(r^{a_b}, m^{a_b}) p(m^{a_b} \vert x; v_b),
    \end{align*}
    for every $a_b \in \mathcal{A}_b$, which in our case holds by design simply for $g(r^{a_b}, m^{a_b}) = 1$, for every $m^{a_b} \in \mathcal{M}^{a_b}$.
\end{proof}
With Lemma~\ref{lemma:garbling}, message $r^{a_b}$ conditioned on the state has equal probability mass to the sum of all those in $\mathcal{M}^{a_b}$, so the joint distribution of states and actions are the same, so $v_b$ is indifferent between these two communication rules. In addition, garbling can be viewed as a way of adding noise to a communication rule, so is weakly less informative for all types. Hence, by Blackwell's theorem we get that
\begin{align*}
   \mathbb{E}_{\mathcal{R}, \mathcal{X}} \left[ u_b^{+} (\sigma(\theta_r^b), x) \right] \leq \mathbb{E}_{\mathcal{M}, \mathcal{X}} \left[ u_b^{+} (\sigma(\theta_m^b), x) \right], 
\end{align*} 
for every $b \in \mathcal{B}$, which means that every other reported type weakly prefers $p(m \vert x; b)$ over $p(r \vert x; b)$, whilst the true type is indifferent \citep{blackwell1951comparison, blackwell1953equivalent}. 

In brief, considering only direct communication rules with $\mathcal{M} = \mathcal{A}_b$, the value will be unchanged for $v_b$, and the value will only be reduced for $b \neq v_b$, so incentive compatibility and individual rationality are preserved.
\hfill \BlackBox

\section{Proof of Proposition~\ref{prop:concentrated-communication-rules}} \label{app:concentrated-communication-rules}
We use the following result from convex analysis:
\begin{lemma}[\citealp{boyd2004convex}] \label{lemma:concentrated}
    Let $g : \mathbb{R}^n \mapsto \mathbb{R}$ be an affine function and let $\mathcal{S}$ be a (nonempty) convex polytope. Then for every $z \in \operatorname{int}(\mathcal{S})$ there exists a point $z^\prime$ on the boundary $\partial \mathcal{S}$ of $\mathcal{S}$ with $p(z) = p(z^\prime)$.
\end{lemma}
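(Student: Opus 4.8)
The plan is a short piece of elementary convex geometry: locate a direction along which the affine function $g$ is constant, shoot a ray from the interior point $z$ in that direction, and stop exactly when the ray leaves $\mathcal{S}$; the stopping point is the desired boundary point.

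First I would dispose of the degenerate case. If $g$ is constant on $\mathbb{R}^n$, then every point of $\partial\mathcal{S}$ works, and $\partial\mathcal{S}\neq\emptyset$ because a nonempty bounded polytope with nonempty interior is not all of $\mathbb{R}^n$. Otherwise write $g(y)=\langle a,y\rangle+c$ with $a\neq 0$, so that the level set $L=\{y:g(y)=g(z)\}$ is an affine hyperplane whose direction space is $a^{\perp}$, of dimension $n-1$; this is at least $1$ in the ambient dimension relevant here (the feasible region in Figure~\ref{fig:concentrated-communication-rules} is two-dimensional, so $n=2$), which is the only place full-dimensionality of $\mathcal{S}$ together with $n\ge 2$ is used.

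Then I would pick any nonzero $d\in a^{\perp}$ and examine the ray $\{z+td:t\ge 0\}$. Since $z\in\operatorname{int}(\mathcal{S})$ the ray starts inside $\mathcal{S}$, and since $\mathcal{S}$ is bounded the value $t^{\ast}=\sup\{t\ge 0:z+td\in\mathcal{S}\}$ is finite and strictly positive; because $\mathcal{S}$ is closed, $z^{\prime}:=z+t^{\ast}d\in\mathcal{S}$. The point $z^{\prime}$ lies on $\partial\mathcal{S}$, for if it were interior then a whole ball around $z^{\prime}$ would lie in $\mathcal{S}$, giving $z+(t^{\ast}+\varepsilon)d\in\mathcal{S}$ for small $\varepsilon>0$ and contradicting the maximality of $t^{\ast}$. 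Finally $g(z^{\prime})=g(z)+t^{\ast}\langle a,d\rangle=g(z)$ because $d\perp a$, which is exactly the claim.

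I do not expect a genuine obstacle: the argument is routine, and the only points needing care are (i) the existence of a nonzero direction along which $g$ is constant and which one can move along inside $\mathcal{S}$, which is why full-dimensionality and $n\ge 2$ (or $g$ constant) enter, and (ii) that the ray leaves $\mathcal{S}$ at a point still belonging to $\mathcal{S}$, which is precisely where boundedness and closedness of the polytope are invoked. An equivalent packaging, if preferred, is to note that $\mathcal{S}\cap L$ is a compact convex set of dimension $n-1\ge 1$ containing $z$ in its relative interior, hence has nonempty relative boundary; any relative-boundary point $z^{\prime}$ of $\mathcal{S}\cap L$ must lie in $\partial\mathcal{S}$ (otherwise a full neighborhood of $z^{\prime}$ intersected with $L$ would be a relative neighborhood inside $\mathcal{S}\cap L$), and $g(z^{\prime})=g(z)$ since $z^{\prime}\in L$.
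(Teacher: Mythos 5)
Your proof is correct and follows essentially the same route as the paper's: both pass through the interior point $z$ along the level hyperplane of $g$ until the boundary of $\mathcal{S}$ is reached, using compactness and convexity to guarantee a boundary hit. You make the ray-shooting step explicit and correctly flag the tacit requirement that $n\geq 2$ (or $g$ constant) so that $a^{\perp}$ contains a nonzero direction, a point the paper's proof leaves implicit but which holds in the two-dimensional application at hand.
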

\begin{proof}
    If $g$ is constant on $\mathcal{S}$ the claim is trivial, so assume that $g$ is nonconstant. Then, for every $z \in \operatorname{int}(\mathcal{S})$, the level set $\mathcal{L} = \{z^\prime \in \mathbb{R}^n : g(z^\prime) = g(z)\}$ is the intersection of $\mathcal{S}$ with an affine hyperplane. As $\mathcal{S}$ is compact and convex, $\mathcal{L}$ is nonempty, intersecting $\partial \mathcal{S}$ at some point $z^\prime$. 
\end{proof}

Let $I_0 = P (m_0 \vert X = 0)$ and $I_1 = P (m_1  \vert X = 1)$ be a direct communication rule as in Table~\ref{tab:binary-communication-rule}, with $I_0, I_1 \in \mathbb{R}_{[0, 1]}$ and $I_0 + I_1 \geq 1$. Recall $I = I_0 - I_1$, so we can write the feasible region as $\mathcal{S} = \{(I_0, I) \in \mathbb{R}^2: 0 \leq I_0 \leq 1, \ -1 \leq I \leq 1, I \geq 1 - 2I_0, \ I_0 + I \leq 1 \}$ which defines a convex polytope. As the seller's objective is linear in these variables, by Lemma~\ref{lemma:concentrated}, there exists a level set $\mathcal{L}$ whose intersection with $\mathcal{S}$ is nonempty. Therefore $\mathcal{L} \cup \mathcal{S}$ is a line segment with endpoints on $\partial \mathcal{S}$.

Let $\bar{\partial} \mathcal{S} = \{(I_0, I) \in \mathcal{S}: I_0 = h(I)\}$ be the ceiling of $\mathcal{S}$, with $h(I) = 1$ if $I \leq 0$, otherwise $h(I) = 1-I$, as shown in Figure~\ref{fig:concentrated-communication-rules}. We claim any line $\mathcal{L}$ which intersects $\partial \mathcal{S}$ must intersect $\bar{\partial} \mathcal{S}$. If $\mathcal{L} \cup \bar{\partial} \mathcal{S} = \varnothing$, then $\mathcal{L} \cup \partial \mathcal{S}$ would lie entirely in the region where $I_0 < h(I)$. However, by the geometry of $\mathcal{S}$, any line traversing the boundary $\partial \mathcal{S}$ must eventually cross $h(I)$, thus $\mathcal{L} \cup \bar{\partial} \mathcal{S} \neq \varnothing$. Any feasible objective value has an equivalent solution on the ceiling, so communication rules can be defined by this line.
\hfill \BlackBox

\subsection{Proof of Corollary~\ref{corr:value-communication-rule-cases}} \label{app:value-communication-rule-cases}
If type $v_b$ reports bid $b$, the expected value of the nonnegative utility term given message $m$ is received can be written as follows:
\begin{align*}
    \mathbb{E}_{\mathcal{X}} \big[ u_b^{+} (\sigma (\theta_{m}^{b})&, x) \vert m \big] \\
    &= \sum_{x \in \mathcal{X}} p(x \vert m, s_b; \theta_{m}^{b}) u_b^{+} (\sigma(\theta_{m}^{b}), x) \\
    &= \theta_{m}^{b} \Ind{\theta_{m}^{b} \geq 1/2} + (1 - \theta_{m}^{b}) \Ind{\theta_{m}^{b} < 1/2}\\
    &= \left(\theta_m^b \vee (1- \theta_m^b) \right),
\end{align*}
and so by integrating over the message space, the expected nonnegative utility term is given by
\begin{align*}
    \mathbb{E}_{\mathcal{M}, \mathcal{X}} \big[ u_b^{+} (\sigma(&\theta_{m}^{b}), x)\big] \\
    &= \mathbb{E}_{\mathcal{M}} \left[ \mathbb{E}_{\mathcal{X}} \left[ u_b^{+} (\sigma(\theta_{m}^{b}), x) \vert m \right] \right] \\
    &= \sum_{m \in \mathcal{M}} \bigg( v_b P(m \vert X=0; b) \\
    &\quad \quad \quad \vee (1 - v_b) P(m \vert X=1; b) \bigg). 
\end{align*}
Following Proposition~\ref{prop:direct-communication-rules}, let $\mathcal{M} = \{m_{0}, m_1\}$, where $m_0$ and $m_1$ recommend actions $a_b = 0$ and $a_b = 1$. Let $I_0(b) = P(m_0 \vert X = 0; b)$ and $I_1(b) = P(m_1 \vert X = 1; b)$, then with inequalities (\ref{eq:recommends-action-0}) and (\ref{eq:recommends-action-1}), the above expression becomes
\begin{align*}
    \mathbb{E}_{\mathcal{M}, \mathcal{X}} \left[ u_b^{+} (\sigma(\theta_m^b), x)\right] 
    &= v_b I_0 (b) + (1-v_b) I_1 (b) \\
    &= I_1 (b) + v_b \left(  I_0 (b) - I_1 (b) \right) \\
    &= I_1(b) + v_b I(b),
\end{align*}
and following Proposition~\ref{prop:concentrated-communication-rules}, we get
\begin{align*}
    I_1(b) 
    &= I_0(b) - \left(I_0(b) - I_1(b)\right) \\
    &= I_0(b) - I (b) \\
    &= 1 - I(b) \Ind{I(b) \geq 0},
\end{align*}
and so by substituting this into the previous expression, the gain of a communication rule is given by
\begin{align*}
    &\delta(I(b), v_b) \\
    &= \mathbb{E}_{\mathcal{M}, \mathcal{X}} \left[ u_b^{+} (\sigma(\theta_m^b, x)\right] - \mathbb{E}_{\mathcal{X}} \left[ u_b^{+} (\sigma(v_b), x)\right] \\
    &= 1 - I(b) \left( \Ind{I(b) \geq 0} - v_b \right) - \underbracket{\left(v_b \vee (1 - v_b)\right)}_{\textrm{see (\ref{eq:value-no-information})}},
\end{align*}
for every $v_b, m \in \mathbb{R}_{[0, 1]}$.
\hfill \BlackBox

\section{Proof of Proposition~\ref{prop:necessary-and-sufficient-conditions}} \label{app:necessary-and-sufficient-conditions}
In mechanism design, for a social choice function that is defined over a finite set of alternatives, monotonicity is, in general, necessary but not sufficient for implementability \citep{rochet1987necessary}. Identifying domains where monotonicity is sufficient remains an active research area \citep{ashlagi2010monotonicity}. However, \citet{myerson1981optimal} proved sufficiency when a seller auctions a single item to buyers with private valuations and quasilinear utilities. We leverage similarities of this setup to ours, where $I$ is the item allocation. Our proof is similar to that of \citet{bergemann2018design} as the demand for information is independent of the seller's cost. Nevertheless, we provide a detailed proof to be self-contained.

\subsection{Necessity} \label{app:necessary-conditions}
To prove necessity, we start by assuming $I$ is implementable, in which case truthfulness implies that
\begin{align*}
    \delta &(I(v_b), v_b) - t(v_b) \\
    & \geq \delta(I(b), v_b) - t(b) \\
    &= \delta(I(b), b) - t(b) + \delta(I(b), v_b) - \delta(I(b), b),
\end{align*}
for every $b \in \mathcal{B}$. If we now define $\Delta (z) = \delta(I(z), z) - t (z)$ as the rent of type $z \in \mathcal{V}_{b}$ for a truthful report, it follows from (\ref{eq:gain-communication-rule-expanded}) that we can re-write this inequality as
\begin{align*}
     &\Delta (v_b) - \Delta (b)\\
     &= \delta(I(v_b), v_b) - t (v_b) - \delta(I(b), b) + t (b) \\
     &\geq \delta(I(b), v_b) - \delta(I(b), b) \\ 
     &= (v_b - b) I (b) - \left(v_{b} \vee (1-v_{b}) \right) + \left(b \vee (1-b) \right). 
\end{align*}
Without loss of generality we can assume that $v_b \geq b$, as we can simply swap terms otherwise. 
Then, for $v_b < 1/2$, this inequality can be re-written as
\begin{align*}
    I (v_b) + 1 \geq \frac{\Delta(v_b) - \Delta(b)}{v_b - b} \geq I (b) + 1 
\end{align*}
and so $I (v_b) \geq I (b)$ if $v_b \geq b$, hence monotonicity is a necessary condition, which can be shown to hold for $v_b > 1/2$ in a similar fashion. 

\subsection{Sufficiency} \label{app:sufficient-conditions}
Next, to prove sufficiency we need to consider the fact that $\Delta$ is differentiable with respect to $v_b$ on $[0, 1/2)$ and $(1/2, 1]$, so we examine these two intervals separately.
Within these intervals, the value is continuous in $v_b$, so by the envelope theorem \citep{milgrom2002envelope}, incentive compatibility imposes, for $v_b \leq 1/2$
\begin{align*}
    \Delta(1/2) &= \Delta(0) + \int_{0}^{1/2} \frac{d}{dv_b} \delta(I(v_b), v_b) d v_b \\
    &= \Delta(0) + \int_{0}^{1/2} \left(I(v_b) + 1 \right) d v_b,
\end{align*}
and similarly for $v_b \geq 1/2$
\begin{align*}
    \Delta(1/2) &= \Delta(1) - \int_{1/2}^{1} \frac{d}{dv_b}  \delta(I(v_b), v_b) d v_b \\
    &= \Delta(1) - \int_{1/2}^{1} \left(I(v_b) - 1\right) d v_b,
\end{align*}
therefore, as information has no value for types $v_b \in \{0, 1\}$, such that $\Delta(0) = \Delta(1) = 0$, the following must hold
\begin{align*}
    \left. v_b \right\rvert_{0}^{1/2} + \int_{0}^{1/2} I(v_b) d v_b &= \left. v_b \right\rvert_{1/2}^{1} - \int_{1/2}^{1} I(v_b)  d v_b \\
    \implies \int_{0}^{1} I(v_b)  d v_b &= 0,
\end{align*}
thus constraint (\ref{eq:integral-constraint}) is needed.
As $\Delta(v_b) = \delta(I(v_b), v_b) - t (v_b)$, we can construct the following transfers from the envelope representation if $v_b \leq 1/2$,
\begin{align*}
    t(v_b) &= \delta(I(v_b), v_b) - \int_{0}^{v_b} \left( I(z) + 1\right) d z \\
    &= I(v_b) \left( v_b - \Ind{I(v_b) \geq 0} \right) - \int_{0}^{v_b} I(z) d z,
\end{align*}
and similarly if $v_b \geq 1/2$,
\begin{align*}
    t(v_b) &= \delta(I(v_b), v_b) + \int_{v_b}^{1} \left( I(z) - 1\right)  d z \\
    &= I(v_b) \left( v_b - \Ind{I(v_b) \geq 0} \right) + \int_{v_b}^{1} I(z) d z \\
    &= I(v_b) \left( v_b - \Ind{I(v_b) \geq 0} \right) - \int_{0}^{v_b} I(z) d z, 
\end{align*}
where the last line is due to (\ref{eq:integral-constraint}). 
Now we have an expression the transfers for every $v_b \in \mathcal{V}_b$, so we write the rent of type $v_b$ from reporting $b$ as follows:
\begin{align*}
    &\delta(I(b), v_b) - t (b) \\
    &= 1 - I(b)\left( \Ind{I(b) \geq 0} - v_b \right) - \left(v_{b} \vee (1-v_{b}) \right) \\
    & \quad - I(b)\left( b - \Ind{I(b) \geq 0}\right) + \int_{0}^{b}  I(z) d z \\
    &= 1 - \left(v_{b} \vee (1-v_{b}) \right) + (v_b - b) I(b) + \int_{0}^{b}  I(z) d z.
\end{align*}
By inspection, if $b < v_b$ then the term $(v_b - b)$ is positive and the rent increases as $b \rightarrow v_b$ as $I$ is monotone. On the other hand, if $b > v_b$, the term $(v_b - b)$ is negative so we want to decrease $I$, which happens as $b \rightarrow v_b$. The rent is thus is maximized at $b = v_b$, hence the incentive constraints are satisfied.
Individual rationality is also satisfied given the resulting rent is nonnegative for all $v_b \in \mathcal{V}_b$, hence we have proved sufficiency.

\subsection{Proof of Corollary~\ref{corr:optimal-mechanism-continuous}} \label{app:optimal-mechanism-continuous}
Lastly, we show that the seller's objective can be reduced to that in (\ref{eq:integrand}). Stated in Section~\ref{sec:optimal-mechanism}, for a given $I$, the seller's cost is the expected externality when the buyer chooses the correct action, which for a given state $x$, is:
\begin{align*}
    \mathbb{E}_{\mathcal{X}} \big[ u_s^{-} (\sigma (\theta_{m}^{b}), &x; \tau) \vert m \big] \\
    &= \sum_{x \in \mathcal{X}} p(x \vert m, s_s; v_s) u_s^{-} (\sigma(\theta_{m}^{b}), x; \tau) \\
    &= \tau v_s \Ind{\theta_{m}^{b} \geq 1/2} + \tau (1 - v_s) \Ind{\theta_{m}^{b} < 1/2}.
\end{align*}
Following Proposition~\ref{prop:direct-communication-rules}, we know that $\theta_m^b \geq 1/2$ for $m_0$ and $\theta_m^b < 1/2$ for $m_1$, so by integrating over the message space, the expected cost given a communication rule is
\begin{align*}
    c(I(b); \tau) &= \mathbb{E}_{\mathcal{M}, \mathcal{X}} \left[ u_s^{-} (\sigma(\theta_{m}^{b}), x; \tau)\right] \\
    &= \mathbb{E}_{\mathcal{M}} \left[ \mathbb{E}_{\mathcal{X}} \left[ u_s^{-} (\sigma(\theta_{m}^{b}), x; \tau) \vert m \right] \right] \\
    &= \tau v_s P(m_0; v_s) + \tau (1 - v_s) P(m_1; v_s)
\end{align*}
and given we know that
\begin{align*}
    P(m_1; v_s) 
    &= v_s I_1 (b) + (1-v_s) I_1 (b) \\
    &= v_s (1 - I_0 (b)) + (1-v_s) I_1 (b)\\
    &= v_s + I_1 (b) - v_s I(v_s) - 2v_s I_1 (b) \\
    &= v_s - v_s I(b) + (1 - 2v_s) I_1 (b) \\
    &= 1 - v_s - v_s I(b) - (1 - 2v_s) \left(I(b) \vee 0 \right)
\end{align*}
the expected cost reduces to
\begin{align*}
    c (I (b); \tau) =  \ &\tau v_s + \tau (1-2v_s)(1-v_s) \\
    &\quad - \tau (1 - 2v_s) I (b) \left(v_s + \Ind{I (b) \geq 0} \right),
\end{align*}
such that considering direct mechanisms, the sellers minimizes $\mathbb{E}_{\mathcal{V}_b}[c (I (v_b); \tau) ]$. Now we turn our attention to the transfers, with the expected transfer given by
\begin{align*}
    \mathbb{E}_{\mathcal{V}_{b}} \left[ t(v_b) \right] &= \int_{\mathcal{V}_{b}}  t(v_b) d F (v_b) \\
    &= \int_{\mathcal{V}_{b}} \Bigg( I(v_b) \left(v_b - \Ind{I(v_b) \geq 0}\right) \\
    &\quad - \int_{0}^{v_b}  I(v_b) d v_b\Bigg) d F(v_b),
\end{align*}
hence if we let $\alpha = \int_{0}^{v_b}  I(v_b) d v_b$ and $d\beta = p(v_b) dv_b$, then the right-most expression reduces to
\begin{align*}
    &\int_{\mathcal{V}_{b}} \left( \int_{0}^{v_b}  I(v_b) d v_b \right) p(v_b) dv_b \\
    &= \int_{\mathcal{V}_{b}} \alpha d \beta \\
    &= \Big. \alpha \beta \Big\rvert_{0}^{1} - \int_{\mathcal{V}_{b}} \beta d\alpha \\
    &= \left. F(v_b) \int_{0}^{v_b} I(z) d {z} \right\rvert_{0}^{1} - \int_{\mathcal{V}_{b}} F(v_b) I(v_b) d v_b \\
    &= \int_{\mathcal{V}_{b}} I(v_b) d {v_b}  - \int_{\mathcal{V}_{b}} F(v_b) I(v_b) d v_b \\
    &= \int_{\mathcal{V}_{b}} I(v_b) (1 - F(v_b)) d {v_b}, 
\end{align*}
which if we substitute back into the original expression gives
\begin{subequations}
\begin{align*}
    &\mathbb{E}_{\mathcal{V}_{b}} \left[ t(v_b) \right]\\
    &= \int_{\mathcal{V}_{b}} \Bigg( I(v_b) \left(v_b - \Ind{I(v_b) \geq 0}\right) \Bigg)d F(v_b) \\
    &\quad - \int_{\mathcal{V}_{b}} I(v_b) (1 - F(v_b)) d {v_b} \\
    &=\int_{\mathcal{V}_{b}} \Bigg( I(v_b) \left(v_b - \Ind{I(v_b) \geq 0}\right) \Bigg) dF(v_b) \\
    &\quad- \int_{\mathcal{V}_{b}} I(v_b) \frac{1 - F(v_b)}{p(v_b)} dF(v_b) \\
    &= \int_{\mathcal{V}_{b}} \Bigg( I(v_b) \left(v_b - \Ind{I(v_b) \geq 0}\right)  -I(v_b) \frac{1 - F(v_b)}{p(v_b)} \Bigg) dF(v_b) \\
    &= \int_{\mathcal{V}_{b}} \Bigg( I(v_b) \left(v_b - \Ind{I(v_b) \geq 0}\right) + I(v_b) \frac{F(v_b)}{p(v_b)}\Bigg)  dv_b \\
    &\quad - \underbracket{\int_{\mathcal{V}_{b}} I(v_b) dv_b}_{=0 \, \mathrm{by} \, (\ref{eq:integral-constraint})} \\
    &= \int_{\mathcal{V}_{b}} \Bigg( I(v_b) \left(v_b - \Ind{I(v_b) \geq 0} + \frac{F(v_b)}{p(v_b)} \right) \Bigg)  dF(v_b). 
\end{align*}
\end{subequations}
Finally, the expected profit is equal to the expected transfer minus the expected cost, so the objective function in (\ref{eq:optimal-mechanism-continuous}) is given by
\begin{align*}
    \mathbb{E}_{\mathcal{V}_{b}} \left[ t(v_b) -  c (I (v_b); \tau) \right] = \int_{\mathcal{V}_b} J(I, v_b) dF(v_b),
\end{align*}
where 
\begin{align*}
    J(I, v_b) = \, &I(v_b)  \big( v_b  + \frac{F(v_b)}{p(v_b)} + \tau v_s (1 - 2v_s) \\
    &+ \Ind{I(v_b) \geq 0} \left(\tau (1 - 2v_s)^2  - 1\right)\big). \tag*{\hfill \BlackBox}
\end{align*}

\section{Proof of Proposition~\ref{prop:optimal-menu}} \label{app:optimal-menu}
By inspection, the solution to the Lagrangian maximization in (\ref{eq:max-virtual-surplus}) is given by 
\begin{align*}
    I^\ast (v_b) = 
    \begin{cases}
        -1 & \text{if} \ \  \pi^-(v_b) < \lambda^\ast, \\
        z \in \mathbb{R}_{[-1, 0]} & \text{if} \ \  \pi^-(v_b) = \lambda^\ast, \\
        0 & \text{if} \ \ \pi^-(v_b) > \lambda^\ast > \pi^+(v_b), \\
        z \in \mathbb{R}_{[0, 1]} & \text{if} \ \  \pi^+(v_b) = \lambda^\ast, \\
        1 & \text{if} \ \ \pi^+(v_b) > \lambda^\ast,
    \end{cases}
\end{align*}
such that partial information is only offered when the virtual values coincide with the dual variable. With the regularity assumption, the virtual values are strictly increasing, so they have measure zero at these points. Thus, partial information is never offered and the optimal menu is a step function, $I^\ast (v_b) \in \{-1, 0, 1\}$.

\bibliography{bib}
\bibliographystyle{icml2025}

\end{document}